\newcommand{\Paragraph}[1]{\vskip 3pt\noindent\textbf{#1 }}
\newtheorem{theorem}{Theorem}
\newenvironment{myitem}{
\begin{itemize}
\setlength{\parskip}{0pt}
\setlength{\itemsep}{0pt}
\setlength{\partopsep}{0pt}
\setlength{\parskip}{0pt}
\setlength{\topsep}{0pt}
\setlength{\parsep}{0pt}}{\end{itemize}
}
\newcommand{\mj}[1]{\sethlcolor{yellow} \hl{mj: #1}} 
\newcommand{\both}{AOMG}
\newcommand{\Topk}{Few-k merging} 
\newcommand{\topk}{few-k merging} 
\newcommand{\Qt}{Quantile}
\newcommand{\Qts}{Quantiles}
\newcommand{\qt}{quantile}
\newcommand{\qts}{quantiles}
\newcommand{\IS}{\texttt{InitialState}}
\newcommand{\Ac}{\texttt{Accumulate}}
\newcommand{\Dac}{\texttt{Deaccumulate}}
\newcommand{\dacon}{deaccumulation}
\newcommand{\CR}{\texttt{ComputeResult}}
\newcommand{\bing}{Search}
\newcommand{\engine}{Trill}
\newcommand{\pingmesh}{Pingmesh}
\newcommand*{\affmark}[1][*]{\textsuperscript{#1}}
\begin{document}

%
\title{Approximate Quantiles for Datacenter Telemetry Monitoring}

%

%
%
%
%

\author{

\IEEEauthorblockN{Gangmuk Lim} \IEEEauthorblockA{\textit{UNIST}} \and

\IEEEauthorblockN{Mohamed Hassan} \IEEEauthorblockA{\textit{Oracle}} \and

\IEEEauthorblockN{Ze Jin} \IEEEauthorblockA{\textit{Facebook}} \and

\IEEEauthorblockN{Stavros Volos} \IEEEauthorblockA{\textit{Microsoft
Research}} \and

\IEEEauthorblockN{Myeongjae Jeon} \IEEEauthorblockA{\textit{UNIST and
Microsoft Research}}

}

\maketitle

\begin{abstract}

Datacenter systems require efficient troubleshooting and effective resource
scheduling so as to minimize downtimes and to efficiently utilize limited
resources. In doing so, datacenter operators employ streaming analytics for
collecting and processing datacenter telemetry over a temporal window. The
quantile operator is key to these systems as it can summarize the typical
and abnormal behavior of the monitored system. Computing quantiles in
real-time is resource-intensive as it requires processing hundreds of
millions of events in seconds while providing high accuracy.

We overcome these challenges in real-time quantile computation through workload-driven approximation, motivated by three insights in our study: (i) values are dominated by a set of
recurring small values, (ii) distribution of small values is consistent
across different time scales, and (iii) tail values are dominated by a
small set of large values. That is, we propose \both{}, an efficient and accurate
\qt{} approximation algorithm that capitalizes on these insights. \both{}
minimizes memory footprint of the quantile operator via compression and
frequency-based summarization of small values. While these summaries are
stored and processed at sub-window granularity for memory efficiency, they
can extend to compute quantiles on user-defined temporal windows. Low value
error for tail quantiles is achieved by retaining a few tail values per
sub-window. \both{} estimates quantiles with high throughput and less than
5\% relative value error across a wide range of use cases while
state-of-the-art algorithms either have a high relative value error
(9.3-137.0\%) or deliver lower throughput (15-92\%).

\end{abstract}

\section{Introduction}
\label{sec:Introduction} 
Stream analytic systems are key components in large-scale systems 
as they play pivotal roles in monitoring their status and responding to events in real 
time~\cite{Qian13,Chandramouli14,Kulkarni15,Lin16}. 
For instance, datacenter network~\cite{Guo15,googleavailability} and web search
~\cite{Dean13,Jeon16} monitoring systems collect response latencies of
servers to assess the health of underlying systems and/or to guide resource scheduling
decisions. These monitoring systems \emph{continuously} receive massive
amounts of data from a number of machines, perform computations over \emph{recent}
data as scoped by a \emph{temporal window}, and \emph{periodically} report
results, typically in seconds or minutes. Real-time computation of complex operations over
such data volumes requires hundreds of machines~\cite{Qian13,Lin16}, calling
for improvements in stream processing throughput~\cite{Miao17}.

The quantile operator lies in the heart of real-time monitoring systems
as they can determine the typical (0.5-\qt{} or median) or abnormal behavior
(0.99-\qt{}) of the monitored system. 
For instance, in network troubleshooting, the quality of network reachability 
can be measured via a static set of quantiles that are computed on the
round-trip times (RTTs) between datacenter servers~\cite{Guo15,Pekhimenko18}. 
In web search engines, a predefined set of quantiles are computed on query 
response times across clusters and are employed by load balancers so as to 
meet strict service-level agreements (SLA) on query latency~\cite{Dean13}. 
In such scenarios, highly accurate quantiles are required to reduce
any false-positive discoveries.

Accurate and real-time computation of quantiles is challenging as exact and low-latency
computation of \qts{} is resource-intensive and often infeasible. Unlike
aggregation operators (e.g., average) that are computed incrementally with
a small memory footprint, quantile computation requires storing and
processing the entire value distribution over the temporal window. In
real-world scenarios, such as Microsoft's network latency monitoring~\cite{Guo15}
and SLA monitoring of user-facing applications~\cite{Bodik10}, 
where million events arrive every second and the
temporal window can be in the order of minutes, the memory and compute
requirements for exact and low-latency quantile computation are massive.
As a result, approximate \qts{} are often acceptable if they can be computed
with a fraction of resources needed for the exact solution.


In this paper, we uncover opportunities in approximate \qts{} by
characterizing real-world workloads from production datacenters. 
Our study shows that these workloads
have many recurring values and can have a substantial skew. For instance, in
our datacenter network latency dataset~\cite{Guo15,Pekhimenko18}, 
only $0.08$\% of the numerical values in a one-hour temporal window are \emph{unique}.
While most latencies are small and
concentrated, with more than 90\% taking below 1.25~ms, a few latencies are
very large and heavy-tailed, taking up to 75~ms. When studying the
distributions across different time scales, we also find that the
distribution of small values is self-similar. This is not
surprising because datacenter networks work well and function consistently
most of the time, resulting in similar latencies and distributions.
Our findings corroborate prior work which reveals that high redundancy in streaming data
is common in a wide variety of scenarios in datacenters as well as in Internet of Things (IoT)~\cite{Pekhimenko18}.

We propose \both{}, 
a new and practical approach on approximate \qts{} customized for
large-scale monitoring systems in datacenters. 
\both{} leverages the observation that
the quantiles to be monitored are fixed throughout temporal windows.
\both{} takes into account the underlying data distribution for these
quantiles and proposes different approaches for computing non-high \qts{} and
high \qts{}. Each approach capitalizes on the insights from our workload
characterization, delivering efficient and highly-accurate computation of quantiles
as follows:

(1) \textbf{Non-high quantiles.} Based on the observation that distribution
of small values (i.e., the ones used in computation of non-high quantiles)
is self-similar, \both{} first performs quantile computation at the
granularity of sub-windows (i.e., smaller window than the temporal window
defined by the user). The quantile for a given temporal window is then computed by
averaging the quantiles of all sub-windows falling within the temporal
window. During the sub-window computation, 
\both{} significantly reduces the memory consumption by capitalizing the high data
redundancy such that it maintains the frequency distribution of
in-flight data~(i.e., \{value, count\}) instead of the entire value distribution.
This optimization enables \both{} to store only a small collection of unique values
associated with the in-flight sub-window and the quantiles of a few preceding sub-windows,
resulting in higher throughput due to smaller memory footprint.

(2) \textbf{High quantiles.} \both{} explicitly records tail values to better
approximate the high \qts{} (e.g., 0.999-\qt{}). Typically, these values are
infrequent and can be stored efficiently. Our technique, called
\emph{\topk{}}, carefully chooses which and how many tail values are stored
based on the query parameters (i.e., window size and period) as well as
observed data distributions. We highlight two scenarios where \topk{} is
needed: (i) \emph{Statistical inefficiency}. When the sub-window contains too
few data points, the high \qts{} in each sub-window are not statistically
robust as they are impacted by too few values. For instance, if a sub-window
has $1000$ elements, the 0.999-\qt{} is decided only by the two largest
elements;
and (ii) \emph{Bursty tail}. If the distribution of tail values
is highly non-uniform across sub-windows due to bursty outlier values,
their impact on the overall \qts{} is
not reflected well by the \qts{} of their corresponding sub-windows. We discuss
how to merge few-k values to produce an
answer, how to manage space budget, how to detect bursty tail, etc.

Similar to prior work on approximate \qts{} on the recent 
temporal window model~\cite{Lin04,Arasu04,Luo16,Gan18}, \both{} reduces memory
consumption as fewer values are retained during the temporal window.
Prior work, however, seeks to steer the rank error of approximate \qt{} such that
the exact rank and and the approximate rank are within a small distance;
rank $r$ is the $r$-th largest value in the window.
This paper introduces a new metric, namely value error, for the first time which is
relative error in value produced by approximate \qt{} as compared to
the exact value. Several datacenter monitoring scenarios require approximate \qts{}
to achieve low value errors instead of low rank errors since they use the reported numbers
directly for latency pattern
visualization~\cite{Guo15,Pekhimenko18}, identifying performance crisis by comparing with
a number of threshold values~\cite{Bodik10},
resource scheduling based on request response times~\cite{Jalaparti13}, etc. 
Our study shows that rank error based methods bring out low value errors for
non-high \qts{} (e.g., 0.5- to 0.9-\qt{}), but they frequently fail to
achieve low value errors for higher \qts{} (e.g., 0.999-\qt{}).

We have implemented \both{} in \engine{} streaming analytics
engine~\cite{Chandramouli14} and evaluated 
\both{} using both real-world and synthetic workloads. 
Our experiments show
that, relative to computing the exact \qts{} (default in \engine{}), \both{} offers
$2.4$-$4.8\times$ higher throughput for the small temporal window
that includes 10K elements;
the throughput is up to $~62\times$ higher when the window includes 1M elements.
Moreover, compared to prior approximate \qt{} algorithms~\cite{Lin04,Arasu04,Luo16,Gan18}
that were built atop the rank-error approximation metric,
\both{} lowers space usage by around $5$-$20$ times, and the
average relative value error for different \qts{} all falls below $5\%$. In
comparison, prior algorithms either incur high relative value errors
($9.3-137.0$\%) for high \qts{} and thus higher rank errors
or result in lower throughput~($15-92$\% on a
sliding window of 100K elements that includes 10 sub-windows).

We have deployed \both{} in our streaming network monitoring
system in production datacenters~\cite{Guo15}. 
Our insights from production-workload characterization lead to
mechanisms and algorithms that deliver better accuracy and 
higher throughput than the state-of-the-art generic solutions that are
workload-agnostic, targeting wider workloads (that often are not found in practice)
at the cost of accuracy and efficiency.
In summary, this paper has the following contributions:
\begin{myitem}
\item We define the problem of approximate computing of \qts{} with low value
    error (rather than rank error) and present a solution, \both{}. To the
    best of our knowledge, this is the first attempt to tackle this problem.
\item We design mechanisms to reduce memory consumption
    through space-efficient summaries and frequency distribution,
    enabling high throughput in the presence of huge data streams and large temporal windows.
\item We implement \both{} in a streaming engine and show its practicality
    using real-world use cases.
\end{myitem}

\section{Datacenter Monitoring Workloads}
\label{sec:Motivations}

In this section, we introduce real-world scenarios of using \qt{} computation
for datacenter telemetry monitoring in detail and justify the use of value errors as
evaluation metrics instead of rank errors.
Continuous telemetry monitoring is indispensable for network health
dashboard, incident management, troubleshooting, and numerous scheduling
decisions in datacenters~\cite{Guo15,googleavailability,Bodik10,Jalaparti13,Dean13}.
Next, we discuss important real-world examples illustrating the context.

\Paragraph{(1) \pingmesh{} network monitoring system~\cite{Guo15}:}
Servers in Microsoft datacenters probe
each other to measure network latency between servers pairs. From these collected measurements,
a set of metrics including the (0.5, 0.9, 0.99, 0.999) \qts{}
are calculated to capture various network latency issues, such as sudden increases in
high \qt{} latency compared to the median latency (i.e., SLA issue) or the identification of
the podset or spine switch failure (i.e., HW issue).
For instance, if servers in a certain podset all suffer from high network latency (e.g.,
$>$4~ms)
for the 0.99-\qt{} for both inbound and outbound traffic, \pingmesh{} informs that 
there is a HW issue related to the podset switch.

\Paragraph{(2) Datacenter fingerprinting~\cite{Bodik10}:} 
To keep high availability and responsiveness of datacenter servers,
the datacenter operator monitors key end-to-end performance indicators (KPIs)
such as response latency and request throughput for each individual server.
The operator assigns each KPI an SLA threshold
to declare a performance crisis if a certain fraction of machines violate any KPI SLA
for a particular time interval. For instance, an SLA might require the server
interactive response time to be below a certain threshold value for
99.9\% of all requests (i.e., 0.999-\qt{}) in the past 15 minutes~\cite{Bodik10}.

\Paragraph{(3) Resource scheduling:} 
Numerous datacenter services have latency targets for a predefined set of \qts{}.
Given the computed set of \qts{} over a time period, the resource scheduling logic can estimate and provide resources that are \emph{sufficient} to achieve the latency targets.
Various optimizations are possible via such resource scheduling.
For instance, for services with many sequential stages, there can be a global
objective function designed to minimize higher percentiles of the end-to-end
latency~\cite{Jalaparti13}. Per stage, there can be a local objective function
designed to save costs by auto-scaling container sizes---i.e., meeting the per-stage latency targets with minimum resources~\cite{Das16}.

\subsection{Approximating \qts{} with low value errors.}

Due to the large scale of datacenters, comprising hundreds of
thousands servers, a telemetry monitoring system may process
tens of TB data per day or millions of events every second~\cite{Guo15,M3}.
Furthermore, systems such as health dashboard tracking and network
troubleshooting have two requirements that are often in collision: (i) low-latency computations to reduce the time-to-detection (TTD)
time, and (ii) highly accurate results to reduce false discoveries.
We satisfy these requirements through approximate computation of \qts{}
that provides low-latency, yet high-accuracy results.

Our problem space calls that the computation produces \qts{} with
``small relative error in value'' (\textit{i.e.}, close to the true values).
This is because the estimates of \qts{} are frequently compared to the predefined 
threshold or target values that the datacenter operator sets for the \qts{},
as highlighted in the aforementioned examples. Unfortunately, we find that 
this requirement has not been targeted by prior work~\cite{Arasu04,Luo16,Gan18}. Instead, rank error has been widely used by prior work as an approximation metric. A careful study of approximate results generated by minimizing these different metrics uncovers that a low rank error does not necessarily lead to a low value error.

For instance, 
consider a data stream of size $N$ with its elements sorted as $\{e_1, \cdots, e_N\}$ in 
increasing order.
The $\phi$-\qt{} ($0<\phi \leq 1$) is the ${\lceil \phi\\N \rceil}$-th element $e_{\lceil \phi\\N \rceil}$ with rank $r = {\lceil \phi\\N \rceil}$. For a given rank $r$ and data size $N$, prior work focuses on delivering an approximate \qt{} within a deterministic rank-error bound called $\epsilon$-approximation, \textit{i.e.}, the rank of approximate \qt{} is within the range $[r-\epsilon N, r+\epsilon N]$. Assume $\epsilon = 0.02$ and a window size of 100K elements, the rank error is bounded by $\epsilon N = 0.02 \times 100K = 2K$, thereby resulting in the rank interval $[r-2K, r+2K]$, where $r-2K$ and $r+2K$ are the lower and upper bounds respectively. If data is highly variant, a decent rank interval can turn into a large gap between the returned value and the exact value.

\begin{table}[!t]
\begin{adjustwidth}{-1in}{-1in}
\centering
\scriptsize
{
\begin{tabular}{c|ccc||ccc}
\hline
\multirow{2}{*}{\bf Datasets} & \multicolumn{3}{c||}{\bf 0.50-\qt{}} & \multicolumn{3}{c}{\bf 0.99-\qt{}}\\
\cline{2-7}
 & Exact & -2K & +2K & Exact & -2K & +2K\\\hline
\pingmesh{}~\cite{Guo15} & 798 & 781 & 814 & 1,874 & 1,516 & 74,265 \\
\bing{}~\cite{Jeon16} & 3,574 & 3,384 & 3,780 & 200,109 & 154,156 & 263,771 \\
TaxiTrip~\cite{NYCTaxi} & 1,570 & 1,500 & 1,620 & 19,610 & 17,000 & 74,700
\\\hline
\end{tabular}
}
\end{adjustwidth}
\caption{Data variability and subsequent impact of a rank distance.}
\label{tab:skew}
\vspace{-0.3in}
\end{table}


Table~\ref{tab:skew} shows variability in real-world datasets under our study, and the consequential effect of
having a fixed rank interval. For each dataset, we measure exact values at the 0.50-\qt{} and the
0.99-\qt{}, and the values observed with the rank interval of $2K$ (i.e., $\epsilon = 0.02$).
The gap between these two \qts{} is large for each dataset, thus resulting in
the same rank interval that delivers dramatically different value distances
according to underlying data distributions.
For example, \pingmesh{} at the 0.50-\qt{} ($r = 50K$) is 798~us, and its rank distance +2K (\textit{i.e.}, $r + 2K = 52K$) sits in 814~us, which is only 2\% relative value error.
In contrast, the latency at the 0.99-\qt{} ($r = 99K$) is 1,874~us, and the same rank distance (\textit{i.e.}, $r + 2K = 101K > 100K$) sits in the largest latency at 74,265~us, which is 39 times larger.
Therefore, when designing a \qt{} approximation algorithm,
unlike prior work, we must take into account the underlying
data distribution and its influence on value errors.


\section{\Qt{} Processing Model}
\label{sec:Preliminaries}

In this section, we introduce our streaming query execution model and define the problem of
our \qt{} approximation.

\Paragraph{Streaming model.} A data stream is a continuous sequence of data
elements that arrive timely. Each element $e$ has its value associated with a
timestamp $t$ that captures when $e$ occured.
A streaming query defines a \emph{window} to specify the elements to
consider in the query evaluation. For example, we can have a window that includes
the latest $N$ elements seen so far, where $N$ is the \emph{window
size}. Due to the continuous arrival of data, a window requires a
\emph{period} to determine how frequently the query must be evaluated.
For example, a query can
process the recent $N$ elements periodically upon every insertion of new $K$
elements, where $K$ is the \emph{window period}.
Windows could be defined by time parameters, e.g., evaluate the query every
one minute (window period) for the elements seen last one hour (window size).

This paper mainly considers two types of windowing models~\cite{Chandramouli14}
which are frequently used in real-time monitoring of recent events:
(1) \emph{Tumbling Window}, where window size is \emph{equal} to window
period, and (2) \emph{Sliding Window}, where window size is \emph{larger}
than window period. As window size and period are the same in tumbling
window, there is no overlap between data elements considered by two
successive query evaluations.
In contrast, a sliding-window
query overlaps between successive query evaluations, thus allowing
elements to be reused in continuous windows.

\Paragraph{Incremental evaluation.} Incremental
evaluation~\cite{Chandramouli14} supports a unified design logic to
efficient implementation of window-based queries.
The basic idea is keeping state $S$ for
query $Q$ to evaluate, so that state $S$ is updated while new elements are
inserted or old elements are deleted. State $S$ is typically smaller in size
than the data covered by a window, thus making use of resource efficiently.
Further, when computing the query result, using $S$ directly is typically
faster compared to a naive, stateless way that accumulates all elements at
the moment of evaluating the query. To implement an incremental operator,
developers should define the following functions~\cite{Chandramouli14}:
\noindent
\begin{myitem}
    \item \texttt{InitialState:() => S}: Returns an initial state \texttt{S}.
    \item \texttt{Accumulate:(S, E) => S}: Updates an old state \texttt{S} with newly arrived event \texttt{E}.
    \item \texttt{Deaccumulate:(S, E) => S}: Updates an old state \texttt{S} upon the expiration of event \texttt{E}.
    \item \texttt{ComputeResult:S => R}: Computes the result \texttt{R} from the current state \texttt{S}.
\end{myitem}

\noindent For example, the following illustrates how to write an average
operator using the incremental evaluation functions:

\begin{lstlisting}[frame=single]
InitialState: () => S = {Count : 0, Sum : 0}
Accumulate: (S, E) => {S.Count + 1, S.Sum + E.Value}
Deaccumulate: (S, E) => {S.Count - 1, S.Sum - E.Value}
ComputeResult: S => S.Sum / S.Count
\end{lstlisting}

Incremental evaluation on sliding windows tends to be slower to execute than
that on tumbling windows. The primary reason is that the tumbling-window
query is implemented with a smaller set of functions without \Dac{}.
In this case, the query accumulates all data of a period on an initialized
state, computes a result, and simply discards the state. In contrast, a
sliding-window query must invoke \Dac{} for every element to be deleted from
the current window.

\Paragraph{Approximation error.}
We propose \emph{value error} as an important accuracy metric for
approximate \qt{}. To approximate the $\phi$-\qt{}, we must
define an estimator $y_\phi: R^N \rightarrow R$ such that the estimated \qt{}
$y_\phi(e_1, \cdots, e_N)$ is expected to be
close to the truth, which is the exact \qt{} value $e_{\lceil \phi\\N
\rceil}$. This results in the absolute value error $|y_{\phi}(e_1, \cdots, e_N) - e_{\lceil {\phi}N \rceil}|$ between the estimate and
the truth. The aggregation estimator in use will depend
on the associated $\phi$ and the underlying distribution density of the data.


\section{Algorithm for Non-high Quantiles}
\label{sec:ApproximateEvaluation}

First, we present an algorithm that effectively deals with non-high \qts{}
with high underlying distribution density, and its cost and error analysis.
We also illustrate scenarios, where the algorithm alone is insufficient, to
motivate techniques introduced in Section~\ref{sec:FewkMerging}.

\subsection{Algorithm Overview}
The key idea is to partition a window into \emph{sub-windows}, and leverage
the results from sub-window computations to give an approximate answer of the
$\phi$-\qt{} for the entire window. Sub-windows are created following
windowing semantics in use, by which the size of each sub-window is aligned
with window period. These sub-windows follow the timestamp order of data
elements, i.e., sub-windows containing older data elements are generated
earlier than those containing newer ones. For each sub-window, we maintain a
small-size \emph{summary} instead of keeping all data elements. At the window
level, there is an aggregation function that merges sub-window summaries to
approximate the exact $\phi$-\qt{} answer.

\begin{figure}[t]
\centering
\includegraphics[width=3.4in]
{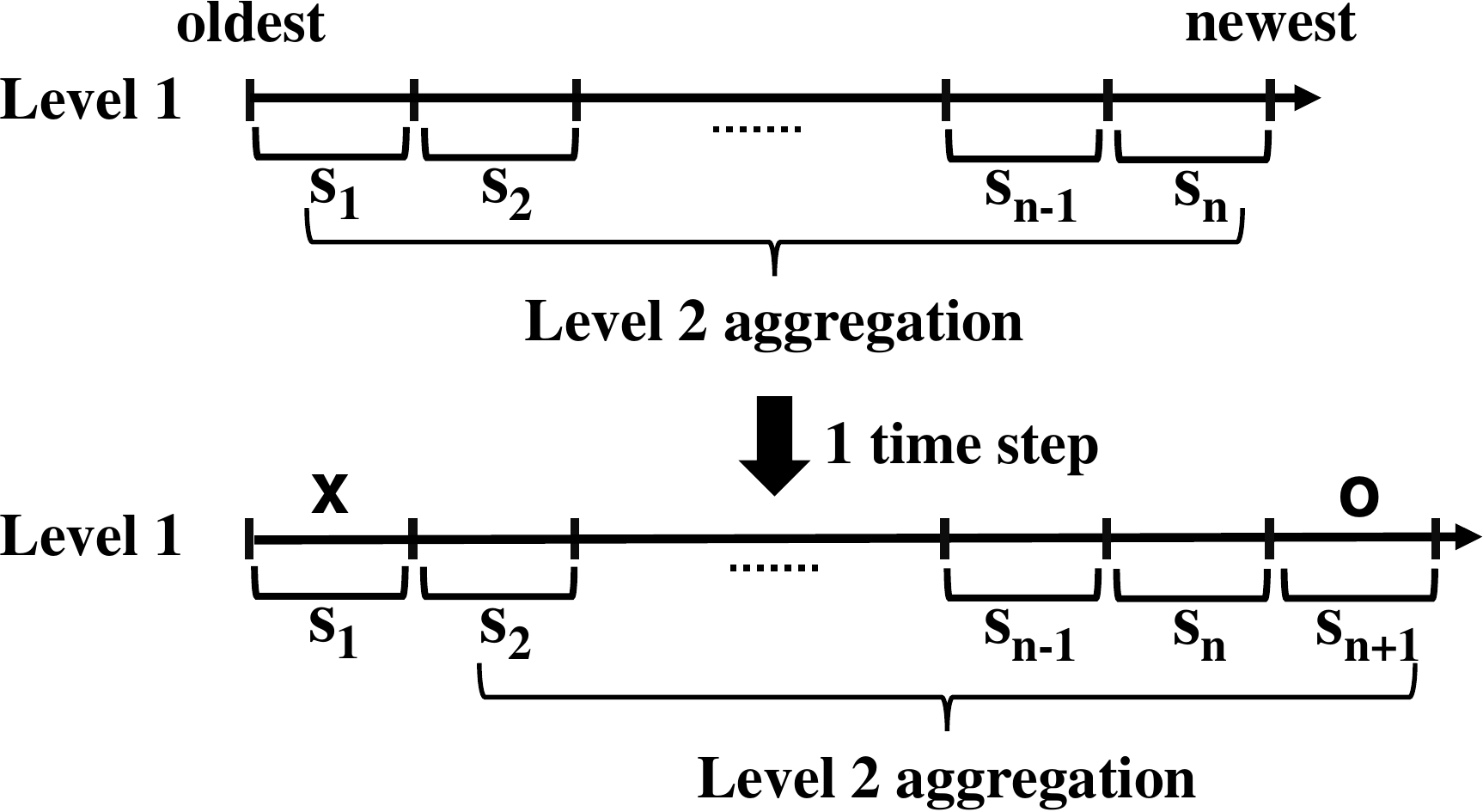}
\caption{Sliding window processing in \both{}.}
\label{fig:TwoLevel}
\vspace{-0.2in}
\end{figure}

More formally, assume a sliding window divided into $n$ sub-windows, where
each sub-window includes $m$ data points. If the $i$-th sub-window has a
sequence of data $x_i = \{ x_{i, 1}, \cdots, x_{i, j} : 1 \leq j \leq m\}$,
we observe the whole data $x = \{x_{i, j} : 1 \leq i \leq n, 1 \leq j \leq
m\}$ in the sliding window. Here, the sample $\phi$-\qt{} of the sliding
window is denoted by $y_e = \phi$-$\qt{}(x)$, which is the exact result to
approximate. \both{} estimates $y_e$ through \emph{two-level hierarchical
processing}, as presented in Figure~\ref{fig:TwoLevel}. $\textbf{Level 1}$
computes the exact $\phi$-\qt{} of each sub-window. The exact $\phi$-\qt{} of
the $i$-th sub-window is denoted as $y_i = \phi$-$\qt{}(x_i)$, which becomes
the summary $s_i$ of the corresponding sub-window. $\textbf{Level 2}$
aggregates the summaries of all sub-windows to estimate the exact
$\phi$-\qt{} $y_e$. \both{} uses mean as an aggregation function guided by
the Central Limit Theorem~\cite{stuart1994kendall, tierney1983space} and
obtains the aggregated $\phi$-\qt{} of the sliding window denoted by $y_a
=\frac{1}{n} \sum_{i = 1}^n y_i$.

When the window slides after one time step, as the figure shows \both{}
discards the oldest summary $s_1$, and adds $s_{n+1}$ for the new sub-window,
thereby forming a new bag of summaries $\{s_2, \cdots, s_{n+1}\}$.

In principle, \both{} is a hybrid approach that combines tumbling windows
into the original sliding-window \qt{} computation, delivering improved
performance. Specifically, while creating a summary \textbf{Level 1} runs a
tumbling window, which avoids \dacon{}. Once a sub-window completes, all
values are discarded and the summary only remains.
\textbf{Level 2} operates a sliding window on summaries, requiring \dacon{}.
However, since a summary contains only a few entries associated with the
specified \qt{}s, 
it can perform fast \dacon{} as well as fast accumulation.

\Paragraph{Creating a new sub-window summary.}
During Level~1 process, we exploit opportunities for volume reduction by data
redundancy. During the sub-window processing, in-flight data are maintained
in a compressed state of $\{(e_1, f_1), \cdots, (e_n, f_n)\}$, where $f_i$ is
the frequency of element $e_i$. A critical property here is that each $e_i$
is unknown until we observe it from new incoming data. To efficiently insert
a new element in the state and continuously keep it ordered, we use the
red-black tree where the element attribute $e_i$ acts as the
key to sort nodes in the tree. This also avoids the sorting cost when
computing the exact \qts{} at the end of Level~1 processing for the in-flight
sub-window.

The logic to manage the red-black tree is
sketched in Algorithm~\ref{alg:exact}. \IS{} and \Ac{} are self-explanatory,
and we explain \CR{} in details. At the result computation in \CR{}, the tree
already has sorted the sub-window's elements. Thus, the computation does an
in-order traversal of the tree while using the frequency attribute to count
the position of each unique element. As the total number of elements is
maintained in $state.Count$, it is straightforward to know the percentage of
elements below the current element during the in-order traversal. A query may
ask for multiple \qts{} at a time. In this case, \CR{} evaluates the \qts{}
in a single pass with the smallest one searched first during the same
in-order traversal.

Lastly, to increase data redundancy, some insignificant low-order digits of
streamed values can be zeroed out. For example, in our network monitoring system,
we consider only the three most
significant digits of the original value, which ensures the quantized value
within less than 1\% relative error.

\Paragraph{Aggregating sub-window summaries.}
The logic for aggregating all sub-window summaries is almost identical to the
incremental evaluation for the average introduced in
Section~\ref{sec:Preliminaries}. The only distinction is that if the number of 
\qts{} to answer is $l$, there are $l$ instances of the average's state
(\textit{i.e.}, sum and count). The accumulation and \dacon{} handlers update
these states to compute the average of each \qt{} separately.

\subsection{Algorithm Analysis}
\label{sec:Analysis}

\Paragraph{Space complexity.} The space complexity for our approximate
algorithm is $l(N/P)+O(P)$. For summaries of $l$ independent \qts{} $\{\phi_i
: 1 \leq i \leq l\}$, we need $l(N/P)$ space, where $N$ and $P$ are the
window size and the sub-window size, respectively. There is at most one
sub-window under construction, for which we maintain a sorted tree of
in-flight data. Its space usage is $O(P)$ which can range from 1 to $P$
depending on the degree of duplicates in the data. In one extreme case,
all elements have the same value, so $O(P) = 1$, and in another extreme case,
there is no duplicate at all, so $O(P) = P$. This spectrum allows us to 
reduce space usage significantly if there is high data redundancy in the workload.

\Paragraph{Time complexity.} The degree of data redundancy is a factor that also
reduces the theoretical time cost in Level 1 stage. For the sub-window of
size $P$ with $n$ unique elements, the \Ac{} cost is $O(log~n)$, which falls
along a continuum between $O(log~1)$ and $O(log~P)$, again depending on the
degree of data duplicates. Likewise, the complexity of \CR{}
is $O(n)$ irrespective of the number of \qts{} to search. Level 2 stage in
\both{} runs extremely fast with a static cost: each of $l$ specified \qts{}
needs two add operations for \Ac{} or \Dac{}, and one division operation for
\CR{}. Section~\ref{sec:Sensitivity} shows an experiment on how higher data
redundancy leads on higher throughput.


{
\begin{algorithm}[t]
\caption{Incremental computation for Level 1}
\label{alg:exact}
\footnotesize
\begin{algorithmic}[1]
\Procedure{\IS{}: }{}
\State \Return $state$ \Comment{new red-black tree}
\EndProcedure

\vspace{-10pt}
\Statex
\Procedure{\Ac{}: }{state, input}
\If{$state.ContainsKey(input) = false$}
    \State $state.CreateKey(input)$
\EndIf
\State $state.IncrementValue(input)$
\State $state.Count := state.Count + 1$
\EndProcedure

\vspace{-10pt}
\Statex
\Procedure{\CR{}: }{state}
\State $runningCount := 0$
\State $result_i := 0,~~i = 1,\dots,l$ \Comment{$l$ \qts{} to answer}
  \State $\phi.Sort()$ \Comment{\qts{} in non-decreasing order}
\State $i := 1$
\State $rank := \lceil \phi_i \times state.Count \rceil$ \Comment{rank of the first \qt{}}
\For{$t := node~in~the~inorder~traversial~of~state$}
    \State $runningCount := runningCount + t.Value$
    \While{$runningCount \geq rank$}
        \State $result_i = t.Key$
        \If{$i = l$}
            \State \Return $result$
        \EndIf
        \State $i := i + 1$
        \State $rank := \lceil \phi_i \times state.Count \rceil$ \Comment{rank of the next \qt{}}
    \EndWhile
\EndFor
\EndProcedure
\end{algorithmic}
\end{algorithm}
}

\Paragraph{Error bound.} Summary-driven aggregation is designed based on
self-similarity of data distribution for non-high \qts{} with high underlying
distribution density, as described in Section~\ref{sec:Introduction}. We
present some initial results on error bound analysis.
Our theorem assumes several conditions. (1) We consider the $\phi$-\qt{} of
each sub-window as a random variable before we actually get the data.
Similarly, the $\phi$-\qt{} of the sliding window is also a random variable
as long as the data is not given. (2) The target $\phi$-\qts{} across
sub-windows are independent and identically distributed (\emph{i.i.d.}). (3)
Data in the window have continuous distribution.

Now, we derive a statistical guarantee to have the aggregated estimate $y_a$
close to the exact $\phi$-\qt{} $y_e$ using the Central Limit Theorem (see
Theorem~\ref{unique} in Appendix for details). To illustrate how to interpret the results
in Theorem~\ref{unique}, suppose we obtain the aggregated estimate $y_a$ and
the error bound $e_b$ over the data stream of a window. Since the exact \qt{}
is $y_e$, we use $y_a$ to approximate $y_e$, and evaluate how close they are
using $e_b$. Essentially, we claim that $y_e \in [y_a - e_b, y_a + e_b]$ with
high confidence (e.g., probability 95\%).

Our probabilistic error bound depends on the density of underlying data
distribution for the specified $\phi$-\qt{}. Recall
the workloads presented in Section~\ref{sec:Motivations} where the density only decays in the tail,
making the density at the 0.5-\qt{} (median) much larger than that at the
0.999-\qt{}. In this case, the error bound is expected to be much tighter for
non-high \qts{} than for high \qts{}. Narrower error bounds imply lower
estimation errors, otherwise the error bound is not informative. For a number
of tests we performed, including a query for
Table~\ref{tab:competing_policies}, we see that the observed value error is
much lower than the error bound $e_b$.


Finally, due to condition (2) assumed in our theorem, the dependence between
data is not supposed to collapse our method. In
Section~\ref{sec:Sensitivity}, we show that our aggregated estimator can be
effectively applied to non-\emph{i.i.d.} data in a diverse spectrum of data
dependence, with competitive accuracy compared to \emph{i.i.d.} data.

\subsection{Special Consideration for High Quantiles}
\label{sec:Limitations}

There are two complementary cases where achieving high accuracy for high
\qts{} needs special consideration.

\Paragraph{Statistical inefficiency.}
The inaccuracy of high \qts{} becomes more significant when there is lack of
data to accurately estimate the \qts{} in a sub-window. For example, in
Figure~\ref{fig:relative_errors}, the estimated error increases noticeably at
the 0.999-\qt{} if sliding windows use 1K elements in a period
(\textit{i.e.}, 1K period). In this case, the two largest elements are used
when computing the 0.999-\qt{} in each sub-window. This makes statistical
estimation not robust under the data distribution, and thus misleads the
approximation.


There are several remedies for this curse of high \qts{}. Users can change
the parameters of query window to operate larger sub-windows with more data
points. This provides better chance to observe data points in the tail,
allowing more precise estimates of high \qts{}. Another approach is to cache
a small proportion of raw data without changing windowing semantics, and use
it to directly compute accurate high \qts{} of the sliding window. 

\Paragraph{Bursty tail.}
When bursty tail happens, extremely large values are highly skewed in one
or a few sub-windows. In effect, they dominate the values to be observed across
the window for computing high \qts{}.

\section{\Topk{}}
\label{sec:FewkMerging}

\Topk{} in \both{} leverages \emph{raw data points} to handle large value
errors that could appear in high \qts{} as a result of statistical
inefficiency or bursty tail. During \topk{}, each sub-window collects $k$
data points among the largest values in its portion of streaming data and
uses the $k$ values to compute the target high \qt{} for the window.

\subsection{Challenges}
\label{sec:TopkChallenges}

\begin{figure}[t]
\centering
\includegraphics[width=3.2in]
{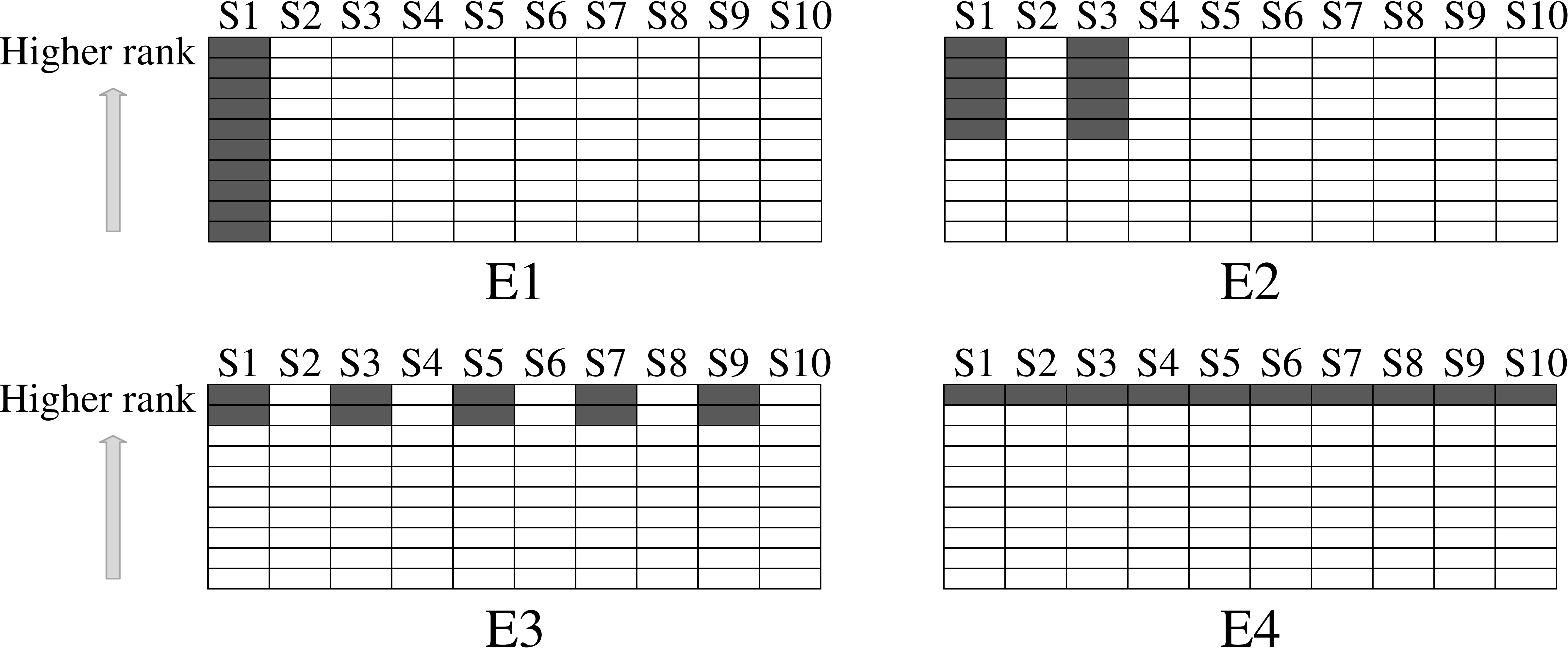}
\caption{Examples ($E1 - E4$) where the largest 10 values (colored in dark) appear differently at sub-windows ($S1 - S10$).}
\label{fig:windows}
\vspace{-0.2in}
\end{figure}

We begin
with discussing the issues that make collecting right $k$ values challenging.
Figure~\ref{fig:windows} exemplifies four patterns ($E1 - E4$),
where the largest 10 values (colored in dark) of the window are distributed
differently among sub-windows ($S1 - S10$). Assume the target high \qt{} can
be obtained precisely by exploiting these 10 values. Then, $E1$ indicates the
case of extremely bursty tail, where a single sub-window $S1$ includes all
the largest values, whereas $E4$ indicates the case that they are completely
evenly distributed across sub-windows. \Topk{} must enforce $k = 10$ for $E1$
to produce the exact answer. However, for $E4$, any $k \geq 1$ caters to
the precise \qt{}. Using $k = 1$ will sacrifice the accuracy for other cases,
with $E1$ performing the worst.
Driven by
this observation, our \emph{first challenge} is providing a solution to
handle both statistical inefficiency and bursty tail under diverse
patterns corresponding to the largest values over sub-windows.

Bursty tail occurs time-to-time as a result of
infrequent yet unpredictable abnormal behavior of the monitored system.
Therefore, we cannot assume that the distribution of the in-flight largest
values is known ahead or repeats over time. The distribution can only be
estimated once we observe the data. Thus our \emph{second challenge} is
building a robust mechanism to dynamically recognize the current burstiness
pattern. 


\subsection{Our Method}
\label{sec:MethodFewK}

One possible method is claiming enough data points to compute the exact
\qt{} regardless of statistical inefficiency and bursty traffic. Formally, to
guarantee the exact answer of $\phi$-\qt{} ($0<\phi \leq 1$) on the window
size of $N$, each sub-window must return $k = N(1-\phi)$ largest elements.
Then, the entire window will need to have space for $N(1-\phi)N/P$ elements
in total, where $P$ is the sub-window size. This approach could be costly if
the window size of a given sliding window query is significantly larger than
the window period. We thus consider \topk{} when the space is limited.

Let $B$ be the space budget, where $B$ is smaller than the space for the
exact $\phi$-\qt{}, \textit{i.e.}, $B < N(1 - \phi)N/P$. Given such $B$, we
assign each sub-window the same space budget $k$, where $B = k \cdot N/P$,
and $k < N(1 - \phi)$. Within each sub-window, $k$ will be further
partitioned into two parts, $k_t$ to address statistical inefficiency by
\emph{top-k merging} and $k_s$ to address bursty tail by \emph{sample-k
merging}, such that $k = k_t + k_s$.
We now explain how to use the given sub-window budgets, $k_t$ and $k_s$, to
handle statistical inefficiency and bursty tail.

\Paragraph{Top-k merging for statistical inefficiency.} When handling
statistical inefficiency, each sub-window caches its $k_t$ \emph{largest
values}, based on the observation that the global largest values required to
compute high \qts{} tend to be
\emph{widely spread} similar to $E4$ in Figure~\ref{fig:windows}.
These all $k_t$ largest values are then merged across the entire window
to answer $\phi$-\qt{}. For window size $N$, we draw the $N(1 - \phi)$th
largest value from the merged data to approximate the $\phi$-\qt{}.
Section~\ref{sec:FewkEval} shows that this method can indeed
trade-off small space consumption for high accuracy (\textit{i.e.},
low value error) in approximating high \qts{} in real-world scenarios.

\Paragraph{Sample-k for bursty traffic.} 
In bursty tail, the largest values in some sub-windows are relatively worse than
those in others.
Thus when coping with bursty tail, we do not differentiate the largest
values within each sub-window, unlike the top-k merging that considers
higher-rank values in each sub-window more important.

In the sample-k merging, each sub-window takes $k_s$ \emph{samples} from its $N(1 - \phi)$
largest values so as to capture the distribution of the largest values using
a smaller fraction. It takes an interval sampling which picks every $i$-th
element on the ranked values~\cite{Luo16}; e.g., for $i = 2$, we select all
even ranked values. The sampling interval is inversely proportional to the
allocated fraction $\alpha$ = $\frac{k_s}{N(1 - \phi)}$.
After merging all samples, the resulting $\phi$-\qt{} is obtained by
referring to the $\alpha N(1 - \phi)$th largest value to factor in data
reduction by sampling.

\subsection{Runtime Handling}
\label{sec:TrafficHandling}

The two proposed value merging techniques are triggered by different conditions
during the streaming query processing. The
use of top-k merging is decided by query semantics (e.g., windowing
model and target \qts{}). If there is any high \qt{} that suffers from statistical
inefficiency, the top-k merging for the \qt{} will be activated. In contrast,
sample-k merging is a standing pipeline and will be exploited 
anytime by ongoing bursty tail.
Currently, several decisions made for runtime handling are guided 
empirically or by parameters measured offline. Future work includes
integrating these processes entirely online.

\Paragraph{Enabling top-k merging.} For each $\phi$-\qt{}, we initiate
the top-k data merging process if $P (1 - \phi)$ is below the
threshold that decides the statistical inefficiency, for using sub-window of
size $P$. Otherwise, we directly exploit results estimated from our
approximation algorithm presented in Section~\ref{sec:ApproximateEvaluation}.
We set the threshold as 10 based on evaluating several monitoring workloads
we run in our system.

\Paragraph{Allocating space.} When deciding $k_t$ for top-k merging, we
estimate the number of data points that a window needs to compute an answer
for the target high \qt{}, e.g., $N(1-\phi)$ data points for $E1$
in Figure~\ref{fig:windows}. 
We use it to decide per sub-window data points, which we use as $k_t$. 
Note that we could assume a more conservative case such as
$E2$ in Figure~\ref{fig:windows}, then we increase $k_t$. 
\both{} assigns all the remaining budget for $k_s$.
$k_s$ is typically larger than $k_t$ because $k_t$ is based on a very small
portion of the largest data.
This is also to make sample-k merging take advantage of it,
since the value error in the estimation of high \qts{}
through sampling is sensitive to sampling rate due to low density of
underlying data.

\Paragraph{Selecting outcomes.} \both{} needs to decide \emph{when} to take
which outcome between the top-k merging and the sample-k merging at runtime.
Results from the sample-k
merging are prioritized if bursty tail is detected. Otherwise, \both{}
uses the results from the top-k merging for those high \qts{} that face
statistical inefficiency. To detect bursty tail, we identify if the
sampled largest values in the current sub-window are distributionally
different and stochastically larger than those in the adjacent former
sub-window.
We develop a simple way (similar to~\cite{Mann47}) to detect bursty tail as
follows. Suppose we
choose the sampled largest values $x_1, \cdots, x_n$ and $y_1, \cdots, y_n$
from two adjacent sub-windows. Then, those values are identified distributionally 
different if the statistics $\sum_{i=1}^n\sum_{j=1}^n \textrm{sign}(x_i - y_j)$ is
far from zero. \both{} decides the current traffic is bursty when it detects one or
a few sequential comparisons in the window that turn out to be distributionally
different. This metric is less affected by outliers, and does not perform any
strict ordering of the data. The sequential comparison between two adjacent
sub-windows may lose some information, but is efficient and fits the steaming
data flow.


\section{Experimental Evaluation}
\label{sec:ExperimentalEvaluation}

We implement \both{} along with competing policies using \engine{}
open-source streaming analytics engine~\cite{Chandramouli14}, and conduct
evaluation using both real-world and synthetic workloads.

\begin{table*}[t]
\begin{adjustwidth}{-1in}{-1in}
\centering
\footnotesize
{
\begin{tabular}{c|cccc|cccc||cc|cc||c}
\hline

& \multicolumn{12}{c||}{\bf Accuracy} & \multirow{2}{*}{\bf Space usage} \\
\cline{2-13}

\multirow{3}{*}{\bf Policy} & \multicolumn{8}{c||}{\bf \pingmesh{}} & \multicolumn{4}{c||}{\bf \bing{}} & \\
\cline{2-14}

& \multicolumn{4}{c|}{\bf Rank error ($e^\prime$)} &
\multicolumn{4}{c||}{\bf Value error (\%)} &
\multicolumn{2}{c|}{\bf Rank error ($e^\prime$)} &
\multicolumn{2}{c||}{\bf Value error (\%)} &
{\bf Observed}\\
\cline{2-13}

& Q0.5 & Q0.9 & Q0.99 & Q0.999 & Q0.5 & Q0.9 & Q0.99 & Q0.999 &
Q0.95 & Q0.999 & Q0.95 & Q0.999 &
{\bf (Analytical)} \\
\hline\hline

\both{} & .0016 & .0005 & .0002 & .0001 & 0.10 & 0.06 & 0.78 & 4.40 &
.0001 & .0001 & 0.34 & 0.02 & 3,340 (16,416) \\
CMQS & .0016 & .0018 & .0009 & .0007 & 0.31 & 0.26 & 1.78 & 28.47 &
.0010 & .0014 & 10.56 & 0.58 & 31,194 (33,504) \\
AM & .0013 & .0012 & .0009 & .0005 & 0.14 & 0.22 & 2.19 & 76.49 &
.0009 & .0003 & 7.15 & 0.43 & 36,253 (45,309) \\
Random & .0005 & .0006 & .0005 & .0007 & 0.05 & 0.12 & 1.11 & 137.03 &
.0006 & .0007 & 4.93 & 4.73 & 68,001 (45,611) \\
Moment & .0180 & .0017 & .0004 & .0002 & 0.98 & 0.28 & 0.76 & 9.30 &
.0008 & .0010 & 2.81 & 12.14 & 16,596 (NA) \\
\hline

\end{tabular}
}
\end{adjustwidth}
\caption{\label{tab:competing_policies}Accuracy and space usage of five approximation algorithms.}
\vspace{-0.2in}
\end{table*}

\subsection{Experimental Setup}
\label{sec:ExperimentalSetup}

\Paragraph{Machine setup and workload.} The machine for all experiments has
two 2.40 GHz 12-core Intel 64-bit Xeon processors with hyperthreading and
128GB of main memory, and runs Windows. Our evaluation is based on the
four datasets.
(1) \textbf{\pingmesh{}} dataset~\cite{Guo15} includes network
latency in microseconds (us), with each entry measuring round-trip time (RTT)
between any two servers in a Microsoft datacenter. (2) \textbf{\bing{}}
dataset~\cite{Jeon16} includes query response time of index serving node (ISN) at
Microsoft Bing. The response time is in microseconds (us), and measures from the time
the ISN receives the query to the time it responds to the user. (3) \textbf{Normal}
dataset includes integers randomly generated
from a normal distribution, with a mean of 1 million and a standard
deviation of 50 thousand.
(4) \textbf{Uniform} dataset includes integers randomly generated
from a uniform distribution ranging from 90 to 110.
Each real-world dataset contains 10 million entries, and
each synthetic dataset contains 1 billion entries.

Asides from these datasets, we also test \both{} using
non-i.i.d. datasets in Section~\ref{sec:Sensitivity} and a geospatial
IoT dataset in Section~\ref{sec:Discussion} to show
its efficacy on a variety of scenarios.




\Paragraph{Query.} We run the query Qmonitor written in LINQ on the
datasets to estimate 0.5, 0.9, 0.99, and 0.999-\qt{} (henceforth denoted by
Q0.5, Q0.9, Q0.99, and Q0.999):
\begin{verbatim}[fontsize=\small]
Qmonitor = Stream
 .Window(windowSize, period)
 .Aggregate(c => c.Quantile(0.5,0.9,0.99,0.999))
\end{verbatim}


\Paragraph{Policies in comparison.} We compare \both{} to the four strategies
that support both tumbling and sliding window models. (1) \textbf{Exact}
computes exact \qts{}. This extends Algorithm~\ref{alg:exact}
with a \dacon{} logic; the node representing the expired element
decrements its frequency by one, and is deleted from the tree if
the frequency becomes zero. This outperformed other methods for computing exact
\qts{}. (2) \textbf{CMQS}, Continuously Maintaining \Qt{}
Summaries~\cite{Lin04}, bounds rank errors of \qt{} approximation
deterministically; for a given rank $r$ and $N$ elements, its
$\epsilon$-approximation returns a value within the rank interval
$[r-\epsilon N, r+\epsilon N]$. (3) \textbf{AM} is another deterministic
algorithm with rank error bound designed by Arasu and Manku~\cite{Arasu04}.
(4) \textbf{Random} is a state of the art using sampling to bound rank error
with constant probabilities~\cite{Luo16}. (5) \textbf{Moment} Sketch is an
algorithm using mergeable moment-based quantile sketches to predict the
original data distribution from moment statistics summary~\cite{Gan18}.

\Paragraph{Metrics.} We use average relative error as the accuracy
metric, number of variables as the memory usage metric, and
million elements per second (M ev/s) processed by a single thread as
the throughput metric. The average relative error (in \%) is measured by
$\frac{1}{n}\sum_{i=1}^n\frac{|a_i - b_i|}{b_i}100$, where $a_i$ is estimated
value from approximation and $b_i$ is the exact value. As stream processing
continuously organizes a window and evaluates the query on it, the error is
averaged over $n$ query evaluations.

\subsection{Comparison to Competing Algorithms}
\label{sec:CompetingAlgo}

This section compares \both{} with the competing policies. We disable
\topk{} in \both{} until Section~\ref{sec:FewkEval} to show how our algorithm
in Section~\ref{sec:ApproximateEvaluation} alone works.

\Paragraph{Approximation error.} The accuracy column in
Table~\ref{tab:competing_policies} shows average value error and rank error
for a set of \qts{} when using 16K window period and 128K window size on
\pingmesh{} and \bing{} datasets.
For CMQS, AM and Random, we configure the error-bound parameter $\epsilon$ as
0.015 guided by its sensitivity to value error. For Moment, we set its $K$
parameter as 12 to be similar in error bounds. For a given \qt{} $\phi$, in
addition to average value error, we present its average rank error measured
by $e^\prime = \frac{1}{n}\sum_{i=1}^n|\frac{r - r^\prime_i}{N}|$, where $r$
is the exact rank of $\phi$, $r^\prime_i$ is the rank of the returned value
for $i$-th query evaluation, $n$ is the total number of query evaluations,
and $N$ is the window size.
$\epsilon = 0.015$ guarantees that none of $|\frac{r - r^\prime_i}{N}|$ is
larger than 0.015.

The results in Table~\ref{tab:competing_policies} show that CMQS, AM and
Random can all successfully bound errors by rank. The average rank errors
stay low within $\epsilon = 0.015$, with the largest error observed in
individual query evaluations across all the policies below 0.0125, which
confirms the effectiveness of the proposed methods. Moment's rank
error is also comparable. It is noteworthy that the rank error of values
returned by \both{} is comparable, with even slightly lower
across the \qts{}.

Comparing across the policies, \both{} outperforms others in value error,
especially for very high \qt{} in \pingmesh{}.
Moreover, we see that
a given rank error has very different influence on the value error across
different \qts{}. For example, comparing Q0.5 and Q0.999, the rank error in
Q0.999 is lower while its corresponding value error is adversely higher. This
is primarily because the \pingmesh{} workload exhibits high variability,
where as presented in Section~\ref{sec:Motivations} the value for the high
\qt{} is orders magnitude larger than the medium \qt{}.
As a result, small change in Q0.999 rank error ends up leading to
$2.1$-$31.1\times$ difference in value error.

Interestingly, such huge value errors appeared in Q0.999 are
less apparent with \bing{} workload as shown in
Table~\ref{tab:competing_policies}.
This is primarily because \bing{} ISN initiates query termination if a query
executes longer than the pre-defined response time SLA, e.g.,
200~ms. Those terminated queries are accumulated on Q0.9 and above.
However, looking at Q0.95, whose response time is typically smaller than
the SLA, we still see relatively large value errors in CMQS, AM, Random, and
Moment, whereas the error is only 0.34\% in \both{}.



\Paragraph{Space usage.} The space usage column in
Table~\ref{tab:competing_policies} presents the number of variables to store
in memory for each algorithm. The space usage is calculated from the
theoretical bound (Analytical) found in~\cite{Lin04,Arasu04,Luo16,Gan18},
as well as measured at runtime (Observed) while running the algorithm.
\both{} benefits from high data redundancy present in the \pingmesh{},
reducing memory usage from its analytical cost substantially.
Recall that our theoretical cost is $4(N/P)+O(P)$ (see
Section~\ref{sec:Analysis}) for the window size $N$ and the period size $P$.
For $O(P)$, the actual cost approaches to 1 from $P$ as we see more data
duplicates in the workload. This is how \both{} reduces memory consumption in
practice.

\begin{figure}[!t]
\centering
\includegraphics[width=2.4in]{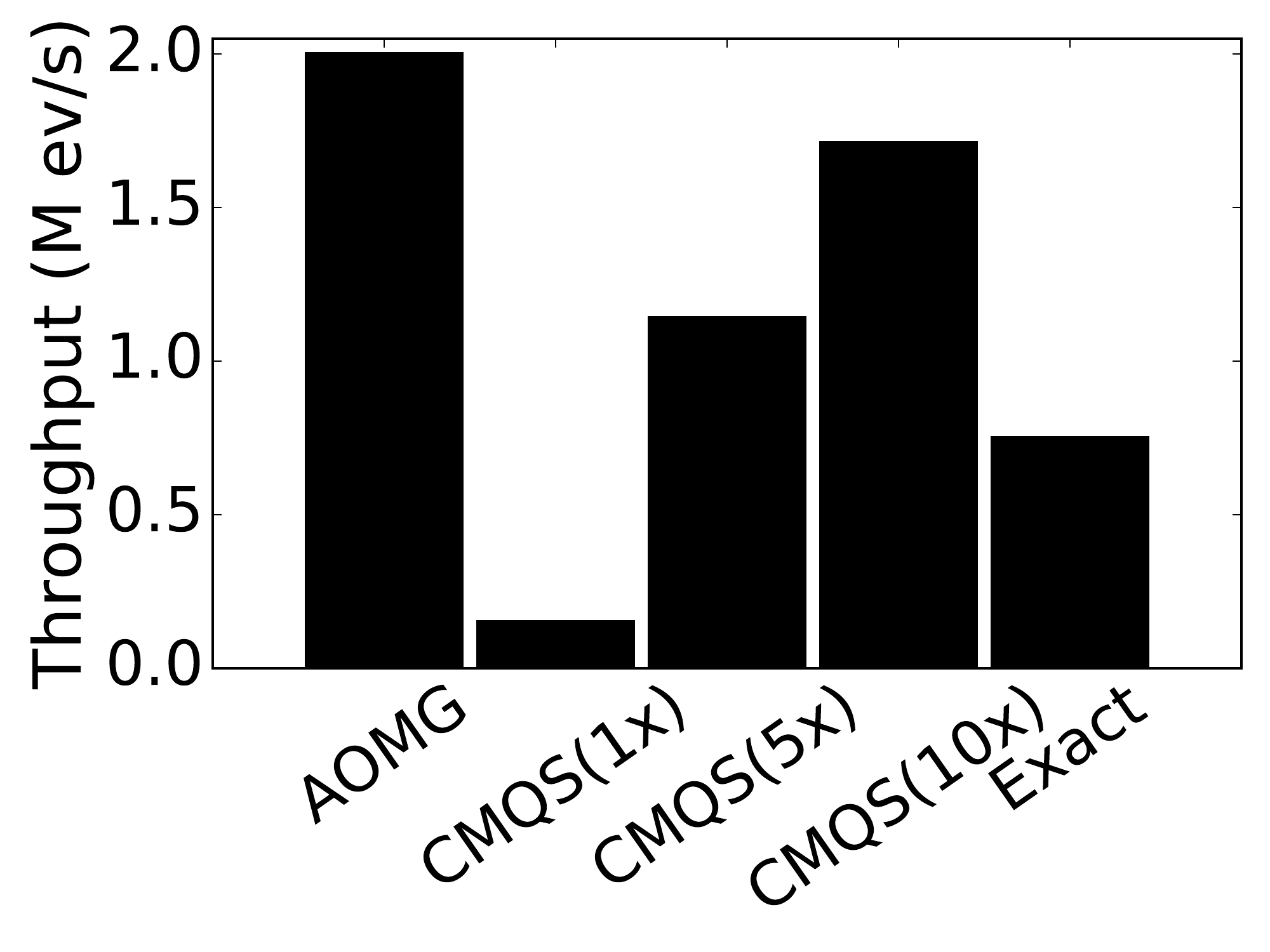}
\caption{\label{fig:gk_throughput} Throughput comparison.}
\vspace{-0.2in}
\end{figure}

\begin{figure*}[!t]
\begin{adjustwidth}{-1in}{-1in}
\center
\subfigure[\pingmesh{}]{\includegraphics[height=1.25in]{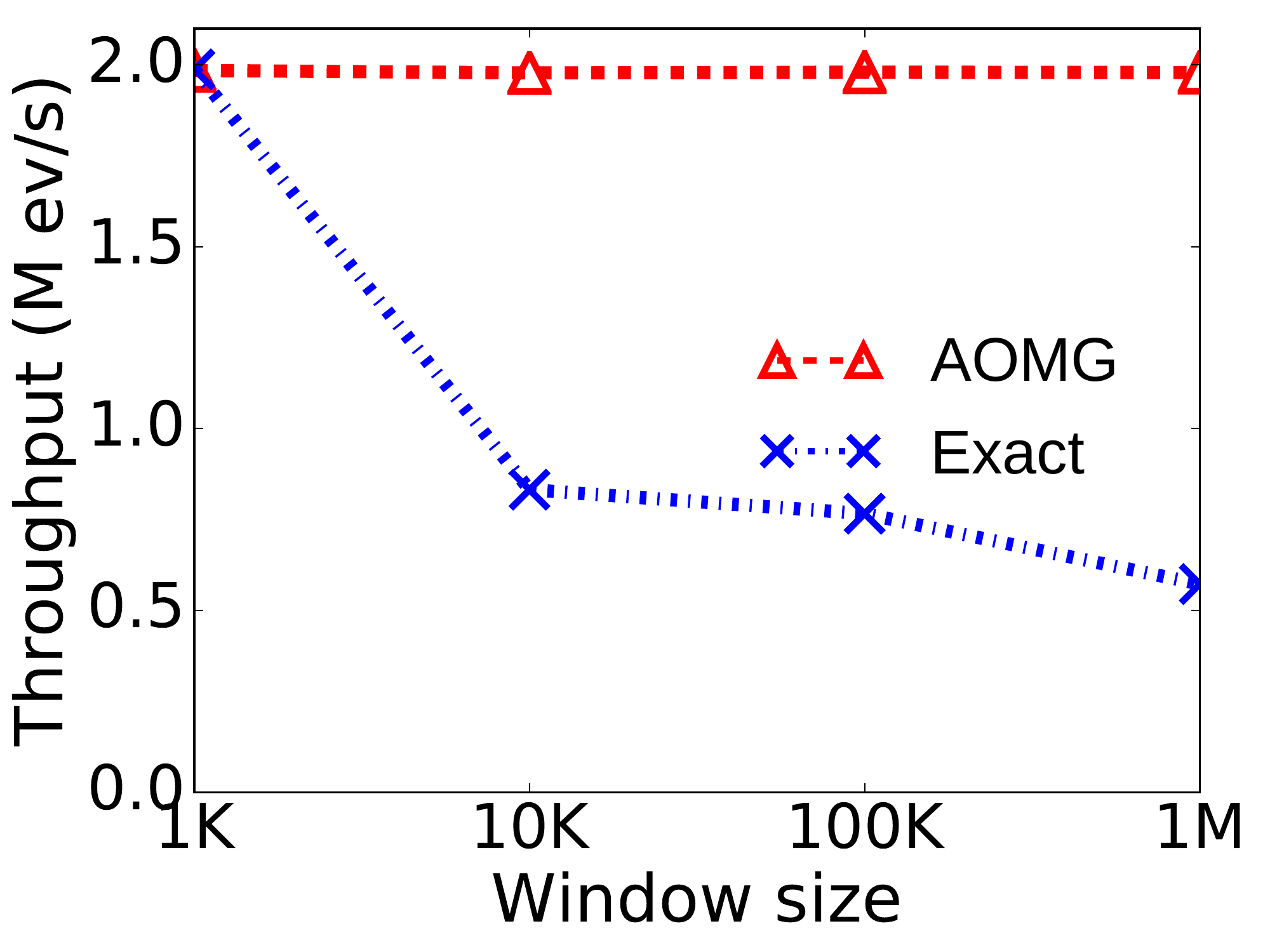}}
\subfigure[\bing{}]{\includegraphics[height=1.25in]{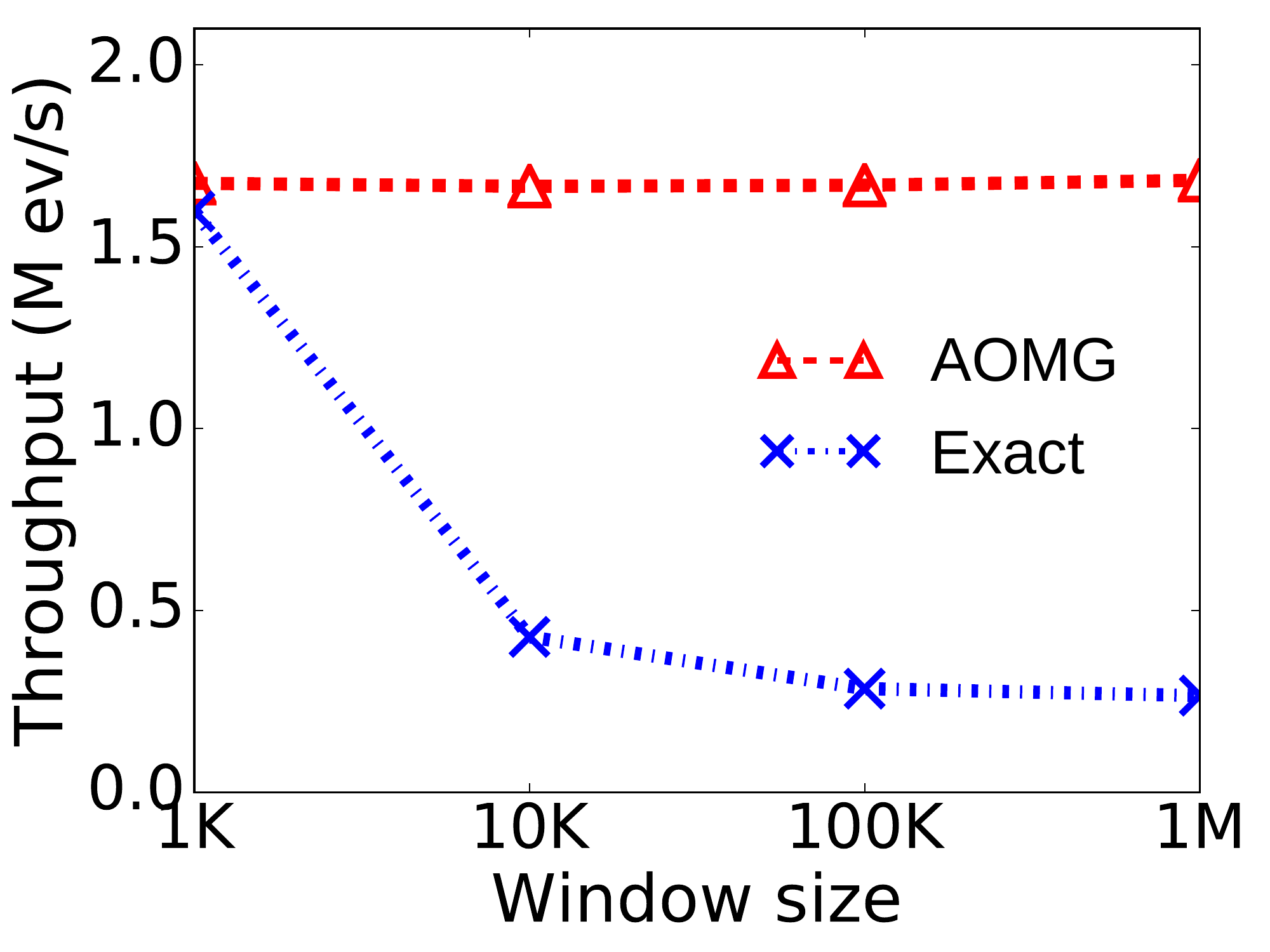}}
\subfigure[Normal]{\includegraphics[height=1.25in]{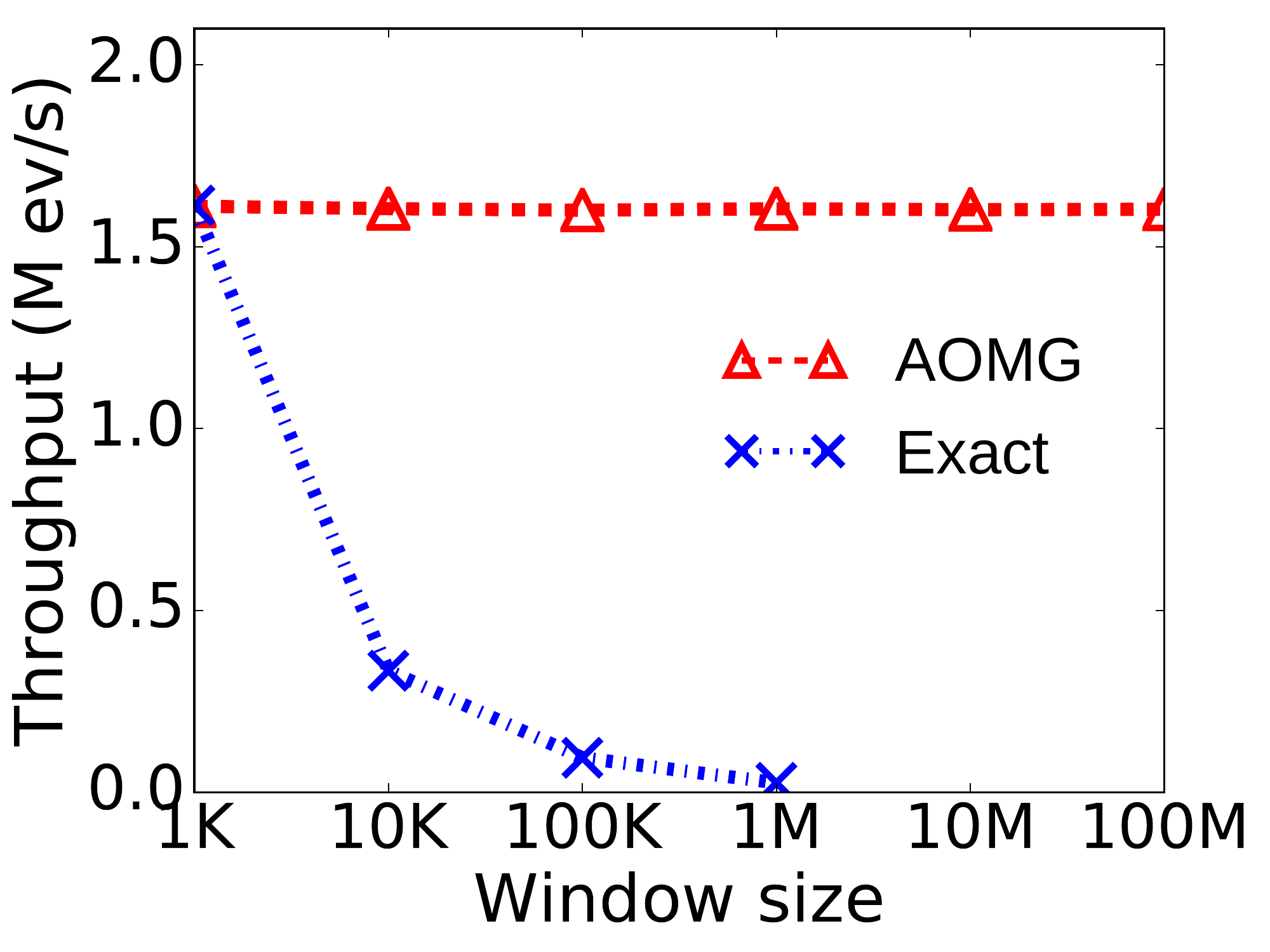}}
\subfigure[Uniform]{\includegraphics[height=1.25in]{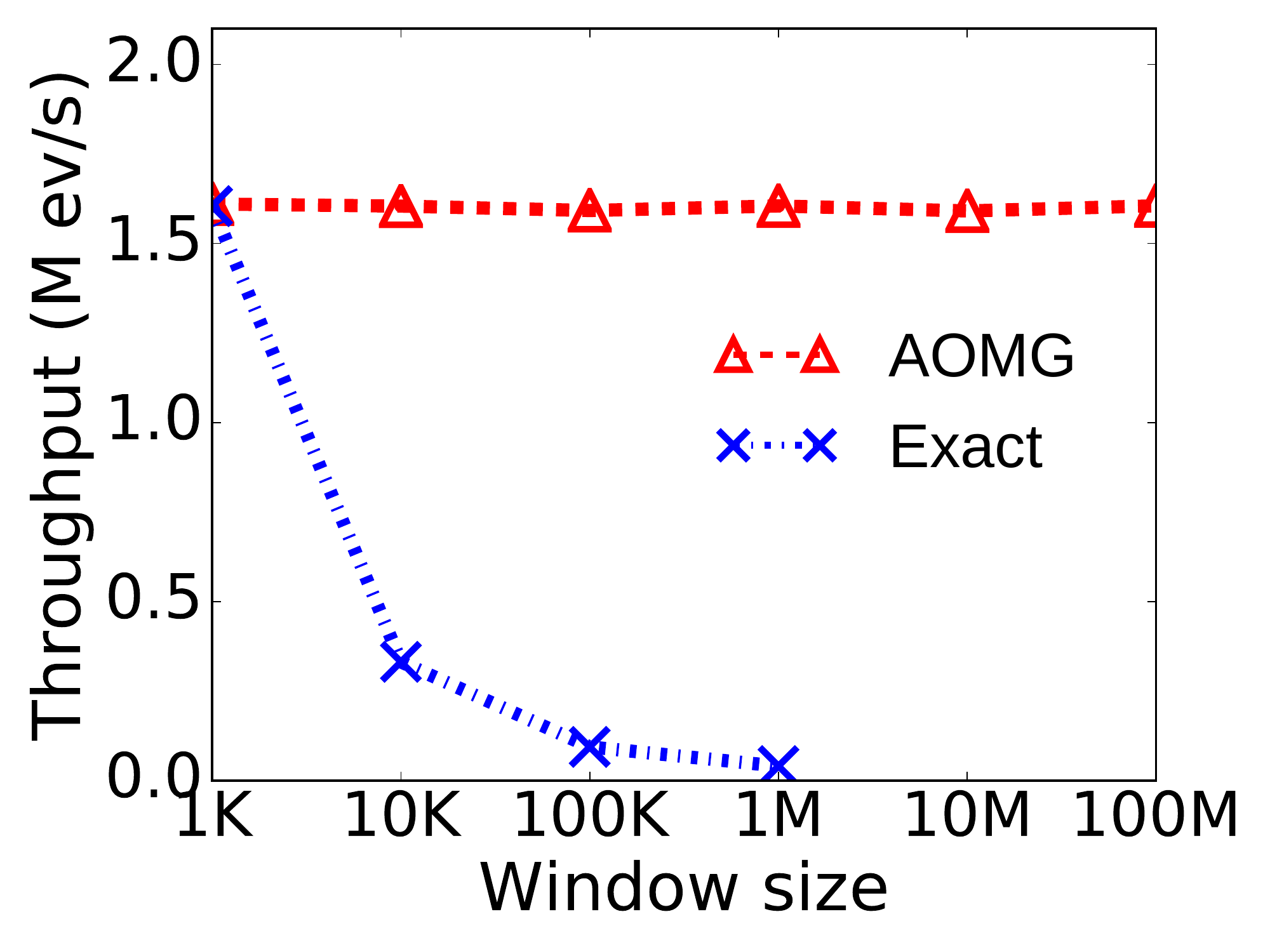}}
\end{adjustwidth}
\caption{\label{fig:throughput} Throughput of Exact and \both{} over varying window sizes. The window period is fixed as 1K elements.}
\vspace{-0.2pt}
\end{figure*}

Additionally, we test a larger $\epsilon = 0.1$ for CMQS, AM, and Random, and
$K = 3$ for Moment in order to reduce their space usage. It goes considerably
down to around 6,000, but value errors become extremely high.

\Paragraph{Throughput.} For throughput, we compare \both{} with CMQS, which
is observed as the most high-performance among rank-bound algorithms. In
CMQS, each sub-window creates a data structure, namely a \emph{sketch}, and
all active sketches are combined to compute approximate \qts{} over a sliding
window. The capacity of each sub-window is $\lfloor \frac{\epsilon N}{2}
\rfloor$ to ensure $\epsilon$-approximation under a sliding window~\cite{Lin04}.
In other word, if the sizes relevant
to a sliding window are given, we can get $\epsilon$ to be deterministically
ensured. In this experiment, we consider a query with 1K period and 100K
window. Here the $\epsilon$ is calculated as 0.02, and to cover a wider error
spectrum, we enforce the $\epsilon$ to range from 0.02 (1x) to 0.2 (10x).

Figure~\ref{fig:gk_throughput} presents the throughput of \both{} compared
with CMQS for varying $\epsilon$ values and also with Exact. Overall, \both{}
achieves higher throughput than CMQS across $\epsilon$ values and Exact.
CMQS has a clear trade-off between accuracy and throughput. If $\epsilon$ is
set too small (e.g., 1x), then the strategy will be too aggressive
and will largely lower performance (even lower than Exact). If $\epsilon$ is set
too large (e.g., 10x), then the throughput is largely recovered.
However, in this case, the strategy becomes too conservative and will be too
loose in bounding the error. In theory, this will allow a rank distance for
approximate \qt{} up to $\epsilon N = 20K$, which is unacceptable.



\Paragraph{Scalability.}  Figure~\ref{fig:throughput} presents the throughput
of \both{} and Exact with respect to different window sizes.
With window period fixed with 1K
elements, we vary the window size over a wide range from 1K to 1M elements
in x-axis, covering the use of tumbling window and sliding window that
contains up to 1K sub-windows. For the two synthetic datasets, Normal and Uniform,
we increase the window size up to 100M elements to further stress the query.


\both{} shows the consistent throughput for all window sizes for all
datasets. In comparison, Exact has throughput degradation as it begins to use
sliding window. For example, when the window size is increased to 10K, Exact shows
throughput degradation by 58-79\%. This is a consequence of paying \dacon{} cost
to search and eliminate the oldest 1K elements from the tree state for every
windowing period. \both{} achieves high scalability by mitigating such
\dacon{} cost and using small-size state as a summary of each sub-window.

We so far have presented how \both{} achieves low value error, low space
consumption, and scalable performance. Next, we present benefit of using
\topk{}.

\subsection{Few-k Merging}
\label{sec:FewkEval}


\begin{figure}[!t]
\centering
\includegraphics[width=2.4in]{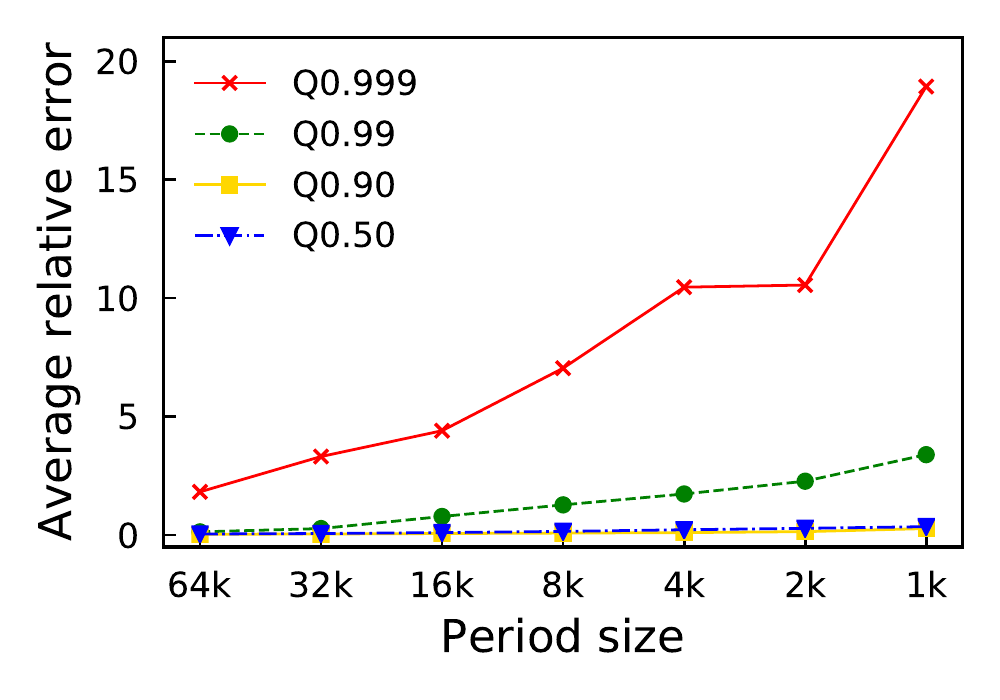}
\vspace{-0.1in}
\caption{\label{fig:relative_errors} Average relative errors without few-k merging (window size = 128K).}
\end{figure}

\Paragraph{Addressing statistical inefficiency.} As explained in
Section~\ref{sec:Limitations}, a larger period enables us to use more data
points to estimate high \qts{}, and deliver more accurate results.
Queries with small periods are where the top-k merging that
caches the largest values is
effective. To quantify this, we fix window size to include 128K elements, and
we vary the period size over a wide range from 64K to 1K on \pingmesh{}
dataset.

First, Figure~\ref{fig:relative_errors} summarizes average relative value
errors prior to applying the top-k merging on \pingmesh{}. We observe that varying period
sizes is insignificant to Q0.5 and Q0.9 with relative errors less than 1\%,
whereas it matters to Q0.999 with the error going up to 18.93\%. The accuracy
target is domain-specific.
For example, in our \pingmesh{} monitoring system in production,
$\approx$5\% of the relative
error is considered adequate. Therefore, if we set this as the optimization
target, the algorithm presented in Section~\ref{sec:ApproximateEvaluation}
alone is not sufficient.

\begin{table}[!t]
\begin{adjustwidth}{-1in}{-1in}
\centering
\scriptsize
{
\begin{tabular}{c|llll}
\hline
Fraction & 8K & 4K & 2K & 1K \\\hline
0.1 & 5.54 (209) & 2.43 (419) & 1.67 (838) & 1.30 (1,677) \\
0.5 & 0.68 (1,049) & 0.40 (2,097) & 0.36 (4,194) & 0.35 (8,389) \\\hline
\end{tabular}
}
\end{adjustwidth}
\caption{Average relative errors (and observed space usage) for using fraction in top-k merging w.r.t. the exact Q0.999.}
\label{tab:k-caching}
\vspace{-0.2in}
\end{table}

Having focused on Q0.999 in \pingmesh{}, we measure the accuracy by varying
the fraction of the caching size that \emph{guarantees} the exact answer in
the top-k merging, and show results in Table~\ref{tab:k-caching} along with the
observed space usage. Considering the 128K window size, each sub-window needs
to maintain $128K(1-0.999) = 132$ largest entries for the exact answer. As
the table shows, exploiting a much smaller fraction of it for each sub-window's
data can reduce Q0.999 value errors significantly. In particular, using a
half the space (\textit{i.e.}, fraction of 0.5) results in accuracy as close
as the optimal solution that needs the entire top-132 values. Using top-13
values (\textit{i.e.}, fraction of 0.1) makes the error fall around/below our
desired value-error target ($\approx$5\%). This excellent trade-off also
indicates that the largest entries in \pingmesh{} are fairly well distributed
across sub-windows.

Note that in \pingmesh{} missing the largest values in each sub-window indeed
hurts the accuracy. For example, we instead apply interval sampling
using a fraction 0.5, the value errors in 8K, 4K, 2K, and 1K periods explode
with 2.23\%, 4.60\%, 8.33\%, and 13.36\%, respectively.

\Paragraph{Addressing bursty tail.} Next, we discuss the effect of
the sample-k merging on bursty tail. The nature of bursty tail is that
the largest values from one or few sub-windows decide the target high \qt{}
in the window. Sampling here aims at picturing those values using a smaller
amount of space. To evaluate its effect,
we inject a bursty tail traffic to \pingmesh{} such
that it affects Q0.999 and above and appears just once in every evaluation of
the sliding window. In particular, for the window size $N$ and the \qt{} $\phi$, we
increase the values of the top $N(1-\phi)$ elements in every $(N/P)$th
sub-window of the window period $P$ by 10x.

Table~\ref{tab:k-sampling} presents average relative errors for two queries
with 16K and 4K period sizes, both using the 128K window. The fraction is
defined similarly: the amount of data assigned for holding sampled data
relative to the amount needed to give the guaranteed exact answer. The zero
fraction indicates the case that \both{} handles bursty tail without
the samples.
Looking at the zero fraction, the bursty tail is damaging
Q0.999 in both queries, and Q0.99 in the 4K-period query. This
is because burst tail blows up more when using smaller periods. That is,
the bursty tail exhibits the top-132 values which will sweep in the 40th
largest value that the 4K-period query refers for Q0.99.

Using the sampled values, both queries can improve their accuracy. In
general, Q0.999 needs a higher sampling rate (e.g., fraction of 0.5)
since the neighboring values are heavy-tailed. Q0.99 works well even with
conservative sampling (e.g., fraction of 0.1) since the neighboring
entries are gathered in a smaller value range.

\begin{table}[!t]
\begin{adjustwidth}{-1in}{-1in}
\centering
\scriptsize
{
\begin{tabular}{c|ll|ll}
\hline
\multirow{2}{*}{\bf Fraction} & \multicolumn{2}{c|}{\bf 16K} & \multicolumn{2}{c}{\bf 4K} \\
\cline{2-5}
& Q0.99 & Q0.999 & Q0.99 & Q0.999 \\
\hline\hline
0.0 & 0.08 (0) & 44.10 (0) & 28.15 (0) & 55.36 (0) \\
0.1 & 0.14 (1,048) & 25.97 (104) & 0.43 (4,194) & 17.38 (419) \\
0.5 & 0.05 (5,242) & 1.75 (524) & 0.30 (20,971) & 1.52 (2,097) \\
\hline
\end{tabular}
}
\end{adjustwidth}
\caption{\label{tab:k-sampling} Average relative errors (and observed space usage) for using fraction in sample-k merging w.r.t. the exact Q0.999.}
\vspace{-0.2in}
\end{table}

\Paragraph{Throughput.} Throughput in \topk{} is tightly coupled with the
number of entries to process per window. This is because the merged values
must be kept in a sorted form, and utilized directly.
With more merged data, the state grows bigger, consumes more processing
cycles, and thus affects throughput.
To illustrate, we evaluate the top-k merging on a large-window (100K)
small-period (1K) query,
which demands large space for high accuracy by the top-k merging.
With all entries cached
(i.e., fraction of 1), we see 21.2\% throughput penalty compared to \both{}
without the top-k merging. However, at a smaller fraction of 0.2,
where the average relative
error is only 0.6\%, throughput penalty is recovered to 9.0\%.

\subsection{Non-i.i.d. dataset}
\label{sec:Sensitivity}

We test if our aggregated estimator can be
applied to some non-i.i.d. data with competitive accuracy compared to i.i.d
data. To model a diverse spectrum of data dependence, we generate a
non-i.i.d. dataset from an AR(1) model, \textit{i.e.}, autoregressive
model~\cite{box15} of order 1, with coefficient $\psi \in \{0.1, 0.2, \cdots,
0.9\}$, where (1) $\psi$ represents the correlation between a data point and
its next data point, and (2) a larger $\psi$ indicates a stronger correlation
among neighboring data points. Data points in the dataset are identically and
normally distributed, with a mean of 1 million and a standard deviation of 50
thousand. For the purpose of comparison, we generate another i.i.d. dataset
from a normal distribution with the same mean and standard deviation, which
is equivalent to the AR(1) model with $\psi = 0$.

We evaluate the average relative errors between the
estimated and exact values for different \qts{}. Table~\ref{tab:iid} shows
the results for some selected \qts{} using three datasets that range from low
correlation to high correlation. We find that the errors slightly increase
when $\psi = 0.2$ (\textit{i.e.}, non-i.i.d. data with low correlation), and
mildly increase when $\psi = 0.8$ (\textit{i.e.}, non-i.i.d. data with high
correlation), compared to those when $\psi = 0$ (\textit{i.e.}, i.i.d. data).
Also, empirical probabilities that the value errors are within the
corresponding error bounds by Theorem~\ref{unique} in Appendix
are \emph{always} 1 for different $\psi \in \{0.1,
0.2, \cdots, 0.9\}$ and \qts{} $\phi \in \{0.1, 0.2, \cdots, 0.99\}$.
Therefore, we achieve (1) competitive results of non-i.i.d. data with respect
to high accuracy of estimated \qts{}, and (2) high probabilities of absolute
errors within error bounds. Hence, our approach is robust to the underlying
dependence in some sense.

\begin{table}[!t]
\begin{adjustwidth}{-1in}{-1in}
\centering
\scriptsize
{
\begin{tabular}{c||ccc}
\hline
& \multicolumn{3}{c}{\bf Quantiles}\\
$\psi$ & 0.5 & 0.9 & 0.99 \\ \hline\hline
0.0 & $3.46 \times 10^{-5}$ & $1.23 \times 10^{-4}$ & $8.88 \times 10^{-4}$ \\
0.2 & $3.47 \times 10^{-5}$ & $1.39 \times 10^{-4}$ & $9.84 \times 10^{-4}$ \\
0.8 & $5.66 \times 10^{-5}$ & $3.35 \times 10^{-4}$ & $1.56 \times 10^{-3}$ \\\hline
\end{tabular}
}
\end{adjustwidth}
\caption{Average relative errors for three datasets from autoregressive model with different correlation factors.}
\label{tab:iid}
\vspace{-0.2in}
\end{table}

\section{Discussion}
\label{sec:Discussion}


\Paragraph{Applicability.}
The properties of streaming workloads to which \both{} is effective are not limited
to datacenter telemetry monitoring. We show the potential of
\both{}'s applicability by using a high-volume geospatial 
IoT stream of taxi trip reports from New York City~\cite{NYCTaxi}.
On the stream dataset, we run the Qmonitor query (in Section~\ref{sec:ExperimentalSetup})
that continuously computes a set of \qts{} on trip distances, and
compare the accuracy of \both{}
against other methods in comparison.
The results show that for Q.50 and Q,90, rank-error approximate methods and AOMG all
deliver value errors within 5\%. But, similar to Pingmesh and Search, the rank-error
approximate mothods suffers from high value errors in high \qts{} such as Q.999.
Specifically, for Q.999 AOMG exhibits 1.79\% in value error while AM, Random, and Moment
exhibit 6.03\%, 44.43\%, and 21.77\% in value error, respectively. 
We observe that there is similarity in workload characteristics
between the taxi trip dataset and those described in Section~\ref{sec:Motivations}:
values in the middle of the distribution tend to be tightly clustered, and
distribution across sub windows is often self-similar.

There are also monitoring cases where 
most of approximate \qt{} policies exhibit high accuracy in value error
if rank error is small.
For instance, we deploy a query that monitors available host-side resources
such as CPU and memory on a cluster, and observe that
all Q0.999 estimates fall within 5\%.
This is mainly because the maximum value in the input stream is mostly small.
The server OS allocates memory as much as it can for buffer cache, and
this results in most of memory being consistently utilized, making available
memory mostly small.
Nonetheless, AOMG delivers higher throughput and scalability than other methods,
confirming its performance benefits.

\Paragraph{Limitations.}
The class of use cases we study is indeed
common and for those problems \both{} does achieve high accuracy and
low resource usage. However, if data distributions are changing drastically
in a short time frame and thus are not self-similar across sub windows,
\both{} (and rank-error approximate methods too) may not be effective. In this case, 
we could roll back to using the exact \qts{} and employ distributed monitoring
in larger scale to improve throughput.

\section{Related Work}
\label{sec:RelatedWork}

Stream processing engines have been developed for both
single-machine~\cite{Chandramouli14,Miao17} and distributed
computing~\cite{Qian13,Lin16} to provide runtime for parallelism,
scalability, and fault tolerance. \both{} can be applied to all these
frameworks. In the related work, we focus on \qt{} approximation algorithms
in the literature in details.

The work related to \qts{} approximation over data streams can be organized
into two main categories. The first one is a theoretical category that
focuses on enhancing space and time complexities. For the space complexity,
the space bound is a function of an error-tolerance parameter that the user
specifies~\cite{SelectionTimeBounds,Lin04,Arasu04,Shuzhuang17,Nikita19}.
The work in the second
category addresses challenges in \qts{} approximation under certain
assumptions, e.g., approximating \qts{} when the raw data is
distributed~\cite{Distributed_PODS_2012}, leveraging GPUs for \qt{}
computation~\cite{GPU_SIGMOD_2005}, utilizing both streaming data
and historical data stored on disk at the same time~\cite{Sneha16}.
The error bounds of the approximating
techniques in the two aforementioned categories are defined in terms of the
rank. In contrast, \both{} is driven by insights from workload
characteristics, and exploits diffeernt approaches taking into account
the underlying data distribution, producing low value errors.


In~\cite{SelectionTimeBounds}, linear time algorithms were presented to
compute a given \qt{} exactly using a fixed set of elements,
while~\cite{Lin04,Arasu04} presented algorithms for \qt{} approximation over
sliding windows. The work in~\cite{Lin04,Arasu04} is the most related work to
\both{}. In ~\cite{Arasu04}, the space complexity of~\cite{Lin04} is
improved: \textit{i.e.}, less memory space is used to achieve the same
accuracy. Similarly, the randomized algorithms in \cite{Luo16} use the memory
size as a parameter to provide a desired error bound. However, the work in
~\cite{Lin04,Arasu04,Luo16} cannot scale for very large sliding windows when
low latency is a requirement. This is mainly because the cost for deaccumulating the
expired elements is not scaling to sliding-window size. In contrast, \both{}
can scale for large sliding windows due to its ability to deaccumulate an
entire expiring sub-window at a time with low cost. Hence, \both{} fits more
for real-time applications, where low latency is a key requirement.

Many research
efforts~\cite{Distributed_PODS_2004,Distributed_SIGMOD_2011,Distributed_PODS_2012}
assume that the input data used to compute the \qts{} is distributed.
Similarly, the work done in
~\cite{Continuous_SIGMOD_2005,Continuous_PODS_2009} computes \qts{} over
distributed data and takes a further step by continuously monitoring any
updates to maintain the latest \qts{}.
In \both{}, we target applications where a large stream of data may originate
from different sources to be processed by a streaming engine. \both{}
performs a single pass over the data to scale for large volume of input
streams.

Deterministic algorithms for approximating biased \qts{} are first presented
in~\cite{Biased_PODS_2006}. Biased \qts{} are those with extreme values,
e.g., 0.99-\qt{}. In contrast, \both{} is designed to compute both
biased and unbiased \qts{}. Moreover,~\cite{Biased_PODS_2006} is sensitive to
the maximum value of the streaming elements, while \both{} is not. In
particular, the memory consumed by~\cite{Biased_PODS_2006} includes a
parameter
that represents the maximum value a streaming element can have. Biased \qts{}
can have very large values in many applications. \both{} is able to estimate
them without any cost associated with the actual values of the biased \qts{}.

\section{Conclusion}
\label{sec:Conclusion}

\Qts{} is a challenging operator in real-time streaming analytics systems as
it requires high throughput, low latency, and high accuracy. We present
\both{} that satisfies these requirements through workload-driven and
value-based \qt{} approximation. We evaluated \both{} using synthetic and
real-world workloads on a state-of-the-art streaming engine demonstrating
high throughput over a wide range of window sizes while delivering small
relative value error. Although the evaluation is based on single machine, our
\qt{} design can deliver better aggregate throughput while using a fewer
number of machines in distributed computing.


{\footnotesize
\bibliographystyle{plain}
\bibliography{references}
}

\appendix

\begin{theorem}{\label{unique}}
Suppose the data in the sliding window are independent and identically distributed (i.i.d.). When $m \rightarrow \infty$, with probability at least $1 - \alpha$, the following holds asymptotically
\[ |y_a - y_e| \leq \frac{2\Phi^{-1}(\alpha/2)\sqrt{\phi(1-\phi)}}{\sqrt{nm} f(p_\phi)} \]
, where $\Phi^{-1}(\alpha)$ is the upper $\alpha$-\qt{} of standard normal distribution and its inverse $\Phi(\cdot)$ is the cumulative distribution function of standard normal distribution, $f(\cdot)$ is the probability density of the data distribution at its $\phi$-\qt{} $p_\phi$.

In particular, we take $\alpha = 5\%$. When $m \rightarrow \infty$, with probability at least $95\%$, the following holds asymptotically
\[ |y_a - y_e| \leq \frac{2 \times 1.96 \sqrt{\phi(1-\phi)}}{\sqrt{nm} f(p_\phi)} \]
\end{theorem}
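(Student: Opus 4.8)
The plan is to reduce the claim to the classical asymptotic normality of sample quantiles and then combine two confidence statements with a triangle inequality. Write $p_\phi = F^{-1}(\phi)$ for the population $\phi$-quantile of the common continuous distribution $F$, and recall the standard result that for $N$ i.i.d.\ observations from $F$ with density $f$ and $f(p_\phi)>0$, the sample $\phi$-quantile $\hat q_N$ obeys
\[ \sqrt{N}\,(\hat q_N - p_\phi) \xrightarrow{d} N\!\left(0, \tfrac{\phi(1-\phi)}{f(p_\phi)^2}\right). \]
First I would apply this to $y_e$, which is precisely the sample $\phi$-quantile over all $N=nm$ points of the window; since the data are i.i.d.\ by hypothesis, $\sqrt{nm}\,(y_e - p_\phi)$ is asymptotically $N(0,\sigma^2)$ with $\sigma^2 = \phi(1-\phi)/f(p_\phi)^2$.

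Next I would handle $y_a = \frac{1}{n}\sum_{i=1}^n y_i$. Each $y_i$ is the sample quantile of the $m$ points of sub-window $i$, so $\sqrt{m}\,(y_i - p_\phi)\xrightarrow{d} N(0,\sigma^2)$; because the sub-windows are disjoint blocks of the i.i.d.\ stream the $y_i$ are independent, and hence
\[ \sqrt{nm}\,(y_a - p_\phi) = \tfrac{1}{\sqrt{n}}\sum_{i=1}^n \sqrt{m}\,(y_i - p_\phi) \xrightarrow{d} N(0,\sigma^2) \]
as well (a sum of $n$ independent asymptotically $N(0,\sigma^2)$ terms, rescaled by $1/\sqrt n$). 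Thus $y_a$ and $y_e$ share the same asymptotic law about $p_\phi$. Writing $h := \Phi^{-1}(\alpha/2)\,\sqrt{\phi(1-\phi)}/(\sqrt{nm}\,f(p_\phi))$ and using that $P(|Z|\le \Phi^{-1}(\alpha/2)) = 1-\alpha$ for standard normal $Z$ (under the paper's convention that $\Phi^{-1}(\alpha/2)$ is the upper $\alpha/2$ point), both $P(|y_a - p_\phi|\le h)$ and $P(|y_e - p_\phi|\le h)$ tend to $1-\alpha$. The triangle inequality $|y_a - y_e|\le |y_a - p_\phi| + |y_e - p_\phi| \le 2h$ then delivers the stated bound, and specializing $\alpha = 5\%$ gives $\Phi^{-1}(0.025)=1.96$.

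The step I expect to be the main obstacle is converting the two separate $(1-\alpha)$ statements into a single $(1-\alpha)$ guarantee, since a naive union bound yields only $1-2\alpha$. The clean resolution is that $y_a - p_\phi$ and $y_e - p_\phi$ are not independent but asymptotically perfectly correlated: by the Bahadur representation both admit the identical leading term $(\phi - \hat F_{nm}(p_\phi))/f(p_\phi)$, where $\hat F_{nm}$ is the empirical CDF of all $nm$ points, because the per-sub-window empirical terms average exactly to the global one, $\frac{1}{n}\sum_i \hat F_i(p_\phi) = \hat F_{nm}(p_\phi)$. Hence the events $\{|y_a - p_\phi|\le h\}$ and $\{|y_e - p_\phi|\le h\}$ coincide asymptotically, so their intersection also has probability tending to $1-\alpha$; this same observation shows $|y_a - y_e|$ is in fact of order $o_P((nm)^{-1/2})$, so the bound is conservative. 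A secondary technicality is controlling the Bahadur remainders under averaging, confirming that the normalized average of the per-sub-window remainders stays $o_P((nm)^{-1/2})$ and does not disturb the leading-order comparison; this is precisely where continuity of $F$ and $f(p_\phi)>0$ (no ties or flat spot at $p_\phi$) are needed.
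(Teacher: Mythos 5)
Your proposal is correct and follows the same skeleton as the paper's proof: asymptotic normality of each sub-window quantile $y_i$, independence of the $y_i$ across disjoint blocks giving $\sqrt{nm}\,(y_a-p_\phi)\to N(0,\sigma^2)$, the same limit for the full-window quantile $y_e$, and a triangle inequality through $p_\phi$. Where you genuinely diverge is the final combination step, and your version is the more careful one. The paper asserts that each of $|y_a-p_\phi|\le h$ and $|y_e-p_\phi|\le h$ holds with probability $1-\alpha/2$ and then simply "combines" them; under its own convention that $\Phi^{-1}(\alpha/2)$ is the \emph{upper} $\alpha/2$ point, each event in fact has asymptotic probability $1-\alpha$, so a naive union bound would only yield $1-2\alpha$ — exactly the obstacle you flag. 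Your resolution via the Bahadur representation (both deviations share the identical leading term $(\phi-\hat F_{nm}(p_\phi))/f(p_\phi)$ because the block empirical CDFs average exactly to the global one) shows the two events asymptotically coincide, which not only repairs the bookkeeping but establishes the stronger fact that $|y_a-y_e|=o_P((nm)^{-1/2})$ for fixed $n$, so the stated bound holds with probability tending to one. The cost of your route is the extra regularity needed for the Bahadur/Ghosh remainder control ($f$ continuous and positive at $p_\phi$), but that is essentially already required for the CLT the paper invokes, so nothing is lost. In short: same architecture, but you supply the missing justification for the $1-\alpha$ coverage that the paper's proof glosses over.
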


\begin{proof}
By the Central Limit Theorem for the sample $\phi$-\qt{} of \emph{i.i.d.} data \cite{stuart1994kendall, tierney1983space} in each sub-window, we have
\begin{equation*}
y_i \sim \mathcal{N}(p_\phi, \frac{\phi(1-\phi)}{mf(p_\phi)^2})
\end{equation*}
when $m \rightarrow \infty$.
Since data $x_{i,j}, i = 1, \cdots, n, j = 1, \cdots, m$ are $i.i.d.$, $y_i, i = 1, \cdots, n$ are \emph{i.i.d.} as well. Then for the aggregated estimate, we have
\begin{equation*}
y_a = \frac{1}{n} \sum_{i = 1}^n y_i \sim \mathcal{N}(p_\phi, \frac{\phi(1-\phi)}{nmf(p_\phi)^2})
\end{equation*}
when $m \rightarrow \infty$. Therefore, with probability $1 - \alpha/2$, the following holds asymptotically
\begin{equation*}
|y_a - p_\phi| \leq \frac{\Phi^{-1}(\alpha/2)\sqrt{\phi(1-\phi)}}{\sqrt{nm} f(p_\phi)}
\end{equation*}
when $m \rightarrow \infty$. On the other hand, for the sample $\phi$-\qt{} of the sliding window, we have
\begin{equation*}
y_e \sim \mathcal{N}(p_\phi, \frac{\phi(1-\phi)}{nmf(p_\phi)^2})
\end{equation*}
when $m \rightarrow \infty$. Therefore, with probability $1 - \alpha/2$, the following holds asymptotically
\begin{equation*}
|y_e - p_\phi| \leq \frac{\Phi^{-1}(\alpha/2)\sqrt{\phi(1-\phi)}}{\sqrt{nm} f(p_\phi)}
\end{equation*}
when $m \rightarrow \infty$. Combining (2) and (3), with probability at least $1 - \alpha$, the following holds asymptotically when $m \rightarrow \infty$.
\begin{equation*}
|y_e - y_a| \leq \frac{2\Phi^{-1}(\alpha/2)\sqrt{\phi(1-\phi)}}{\sqrt{nm} f(p_\phi)}
\end{equation*}
\end{proof}

\end{document}